\newcommand{\N}{\mathbb{N}}
\newtheorem{theorem}{Theorem}[section]
\newtheorem{lemma}[theorem]{Lemma}              
\newtheorem{proposition}[theorem]{Proposition}
\newtheorem{remark}[theorem]{Remark}
\newcommand{\rem}[1]{{\bf Remark:}}
\newcommand{\Section}[1]{\setcounter{equation}{0}\section{#1}}
\renewenvironment{proof}{\noindent {\bf Proof: }}{\QED\medskip}
\def\QED{{\hspace*{\fill}{\vrule height 1ex width 1ex }\quad} 
 \vskip 0pt plus20pt}
\newenvironment{proofof}[1]{\noindent {\bf Proof of #1:}}{\QED\medskip}
\newcommand{\be}{\begin{equation}}
\newcommand{\ee}{\end{equation}}
\newcommand{\bea}{\begin{eqnarray}}
\newcommand{\eea}{\end{eqnarray}}
\newcommand{\beann}{\begin{eqnarray*}}
\newcommand{\eeann}{\end{eqnarray*}}
\newcommand{\ket}[1]{\vert{#1}\rangle}
\newcommand{\Hil}{\mathcal{H}}
\newcommand{\Spin}{J}
\newcommand{\J}{J}
\renewcommand{\a}{\alpha}
\newcommand{\D}{\Delta}
\newcommand{\bm}{\vec{m}}
\begin{document}

\begin{center}
{\LARGE \bf Isolated Eigenvalues of the Ferromagnetic Spin-$\Spin$ XXZ Chain with Kink Boundary Conditions\\[27pt]}
{\large \bf Jaideep  Mulherkar$^a$, Bruno Nachtergaele$^a$, Robert Sims$^a$, Shannon Starr$^b$\\[10pt]}
\end{center}
$^a$Department of Mathematics,
University of California, Davis,
Davis, CA 95616-8633, USA\\[10pt]
$^b$Department of Mathematics,
University of Rochester,
Rochester, NY 14627, USA\\[10pt]
\email{jmulherkar@math.ucdavis.edu, bxn@math.ucdavis.edu,
rjsims@math.ucdavis.edu,\newline
sstarr@math.rochester.edu}\\[10pt]

\noindent
{\bf Abstract:}

\noindent
We investigate the low-lying excited states of the spin $\Spin$ ferromagnetic XXZ 
chain with Ising anisotropy $\Delta$ and kink boundary conditions. 
Since the third component of the total magnetization, $M$, is conserved, it is
meaningful to study the spectrum for each fixed value of $M$.
We prove that for $J\geq 3/2$ the lowest excited eigenvalues are
separated by a gap from the rest of the spectrum, uniformly in the
length of the chain. 
In the thermodynamic limit, this means that there are a positive
number of 
excitations above the ground state and below the essential spectrum.

\vspace{8pt}
\noindent
{\small \bf Keywords:} Anisotropic Heisenberg Ferromagnet, XXZ Model,
Perturbation Theory
\vskip .2 cm
\noindent
{\small \bf PACS numbers:} 05.30.Ch, 05.50.+q
\newline
{\small \bf MCS numbers:} 81Q15, 82B10, 82B24, 82D40
\vfill
\hrule width2truein \smallskip {\baselineskip=10pt \noindent Copyright
\copyright\ 2007 by the authors. Reproduction of this article in its entirety, 
by any means, is permitted for non-commercial purposes.\par }

\newpage

\Section{Introduction}

In this paper, we are investigating the existence of isolated
excited states in certain one-dimensional, quantum spin
models of magnetic systems. It turns out that if the spins
are of magnitude $1$ or more and their interactions have a suitable
anisotropy, such as in the ferromagnetic XXZ Heisenberg model,
isolated excited states are possible. For the spin $1/2$ 
chain the ground states are separated by a gap to the rest
of the spectrum, and there are no isolated eigenvalues below the
continuum. 

Our main result is a mathematical demonstration that such states 
indeed exist for sufficiently large anisotropy. 
Concretely, we study the one-dimensional spin $J$ ferromagnetic XXZ model with the following
boundary terms. The Hamiltonian is
$$
H_L^{\rm k}(\Delta^{-1})= \sum_{\alpha=-L}^{L-1} \Big[(\Spin^2-S_\alpha^3 S_{\alpha+1}^3) - \Delta^{-1}(S_\alpha^1 S_{\alpha+1}^1 + S_\alpha^2 S_{\alpha+1}^2)\Big]+
    \Spin \sqrt{1-\Delta^{-2}}(S_{-L}^3 - S_L^3)
$$
where $S_\alpha^1$,$S_\alpha^2$ and $S_\alpha^3$ are the spin $\Spin$ matrices acting on the site $\alpha$. Apart from the magnitude of 
the spins, $J$, the main parameter of the model is the anisotropy $\Delta>1$, 
and we will refer to the limit $\Delta\rightarrow\infty$
as the \emph{Ising limit}. In the case of $J=1/2$ these boundary
conditions were first introduced in \cite{PS1990}. They lead to ground states with a domain wall
between down spins on the left portion of the chain and up spins on the right. The domain wall is
exponentially localized \cite{bolina2000}. 
The third component of the magnetization, $M$, is conserved, and there is exactly
one ground state for each value of $M$. Different values of $M$ correspond to different positions of the
domain walls, which in one dimension are sometimes referred to as kinks. 
In \cite{ASW1995} and \cite{GW1995} the ground states for this type boundary conditions
were further analyzed and generalized to higher spin.  
A careful analysis of the Ising limit, see Section \ref{sec:Ising},
reveals that for $J \geq 1$ one or more low-lying excitations,
each with a finite degeneracy, closely resemble the domain wall, i.e.
kinked, ground states, and therefore one should expect them to be
resolute under perturbations. In this paper, we first show that these
states exists and correspond to isolated eigenvalues of the finite
volume XXZ chain with sufficiently strong anisotropy. We illustrate
this feature in Figure \ref{fig:excited_kink}. Moreover, as
consequence of the strong localization near the position of the ground
state kink, these eigenvalues only weakly depend on the distance of
the domain wall to the edges of the chain, and for this reason, we are
next able to demonstrate that they persits even after the
thermodynamic limit. The main difficulty we must overcome corresponds
to the fact that, in the thermodynamic limit, the perturbation of the
entire chain is an unbounded operator, and therefore, the standard,
finite-order perturbation theory is inadequate for a rigorous
argument.

The XXZ kink Hamiltonian commutes with the operator $S^3_{tot} =
\sum_{\alpha = -L}^L S^3_\alpha$. We define $\Hil_M$ to be the
eigenspace of $S^3_{tot}$ with eigenvalue $M \in \{-J(2L +1),\dots, J(2 L+1)\}$.
These subspaces are called ``sectors'', and they are invariant subspaces for $H_L^{\rm k}(\Delta^{-1})$. 

It was shown in \cite{ASW1995,GW1995,Mat1996,KN1998} that for each sector 
there is a unique ground state of $H_L^{\rm k}(\Delta^{-1})$ with eigenvalue 0. Moreover, this ground state, $\psi_M$, is given
by the following expression:
\begin{equation*}
\psi_M = \sum
\bigotimes_{\alpha \in [-L,L]}\binom{2\Spin}{\Spin -m_\alpha}^{1/2} q^{\alpha(\Spin-m_\alpha)}\ket{m_\alpha}_\alpha\, ,
\end{equation*}
where the sum is over all configurations for which $\sum_\alpha m_\alpha = M$
and the relationship between $\Delta>1$ and $q\in(0,1)$ is given by $
\Delta=(q+q^{-1})/2$.
A straightforward calculation shows a sharp transition in the magnetization from fully polarized down at the left to fully polarized up at the right.
 For this reason they are called kink ground states. In \cite{KN1998} Koma and Nachtergaele proved that the kink ground states (as well as their spin-flipped or reflected versions the antikinks) comprise the entire set of ground states for the infinite-volume model, aside from the 2 other ground states: the translation invariant maximally magnetized and minimally magnetized all +J and all -J groundstates. Since the infinite-volume Hamiltonian incorporates all possible limits for all possible boundary conditions, this is a strong {\it a posteriori} justification for choosing the kink boundary conditions. It is worth noting that it has been 
proved for the antiferromagnetic model that no such ground states exist 
\cite{DK2003,Mat2005}.

In \cite{KNS2001}, Koma, Nachtergaele, and Starr showed that there is
a spectral gap above each of the ground states in this model 
for all values of $\Spin$. Based on numerical evidence, they also 
made a conjecture that for $\Spin \ge \frac{3}{2}$ the first excited state of the XXZ model is an isolated eigenvalue, and that the magnitude of
the spectral gap is asymptotically given by $J\gamma(\Delta)$,
where $\gamma(\Delta)$ is an eigenvalue of a particular one-particle
problem. Caputo and Martinelli \cite{CM2003} showed that the gap
is indeed of order $J$.

\begin{figure}
\label{fig:excited_kink}
\begin{center}
\epsfig{file=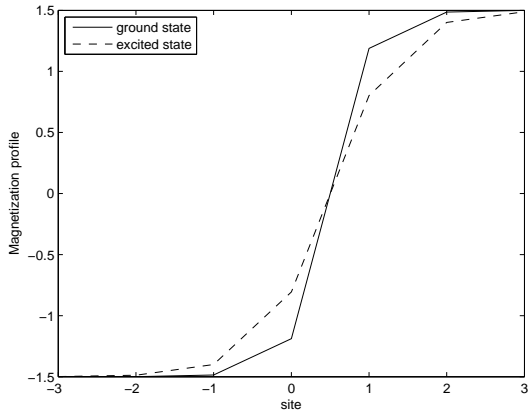,height=6cm,width=8cm}
\caption{The ground state and first excited state of the XXZ chain of length $7$ with $J=\frac{3}{2}$, 
$\Delta=2.5$ in the sector $M=-3/2$.}
\end{center}
\end{figure}

Our main result is a proof that for all sufficiently
large $\Delta$ the first few excitations of the XXZ model are isolated eigenvalues. 
This is true for all $\Spin \geq \frac{3}{2}$ and for spin 1 with $M$ even, which is illustrated in 
Figure ~\ref{fig:xxz_spec} for the spin 3/2 and spin 2 chains.
See Section \ref{sec:Main Results} for the precise
statements. It turns out that in the Ising limit the eigenvalues less than 
$2\Spin$ are all of multiplicity at most $2$ in each sector. Moreover, the first excited states are simple except in the case when 
$J>1$ is an integer and $M=0$ mod $2J$. In this case, they are doubly degenerate. This is discussed in Section \ref{sec:Ising}.
In Section~\ref{sec:mainthm}, we write the XXZ Hamiltonian as an
explicit perturbation of the Ising limit. Theorem~\ref{thm:relbd}
verifies that the perturbation is relatively bounded with respect to
the Ising limit, and we finish this section by demonstrating that our
estimates suffice to guarantee analytic continuation of the limiting 
eigenvalues. It is clear that the same method of proof can 
be applied to other Hamiltonians.

While the question of low-lying excitations is generally interesting,
it is of particular importance in the context of quantum computation.
For quantum computers to become 
a reality we need 
to find or build physical systems that faithfully implement the
quantum gates used in the algorithms of quantum 
computation \cite{Shor1997}. The basic requirement is that the experimenter 
has access to two states of a quantum system that
can be effectively decoupled from environmental noise for a
sufficiently long time, and that transitions between these two
states can be controlled to simulate a number of elementary quantum 
gates (unitary transformations). Systems that have been investigated
intensively are single photon systems, cavity QED, nuclear spins
(using NMR in suitable molecules), atomic levels in ion traps, 
and Josephson rings \cite{NC2000,MOL1999,CVZ1998}.
We believe that if one
could build one-dimensional spin $J$ systems with $J\geq 3/2$, which
interact through an anisotropic interaction such as in the XXZ model,
this would be a good starting point to encode qubits and unitary 
gates. The natural candidates
for control parameters in such systems would be the components of a localized 
magnetic field. From the experimental point of view this is certainly 
a challenging problem. 
This work is a first step toward
developing a mathematical model useful in the
study of optimal control for these systems
such as has already been carried out for nuclear magnetic resonance (NMR)
\cite{KBG2001,MK2005} and superconducting Joshepson qubits \cite{Gal2007}.

\begin{figure}
\label{fig:xxz_spec}
\begin{center}
\epsfig{file=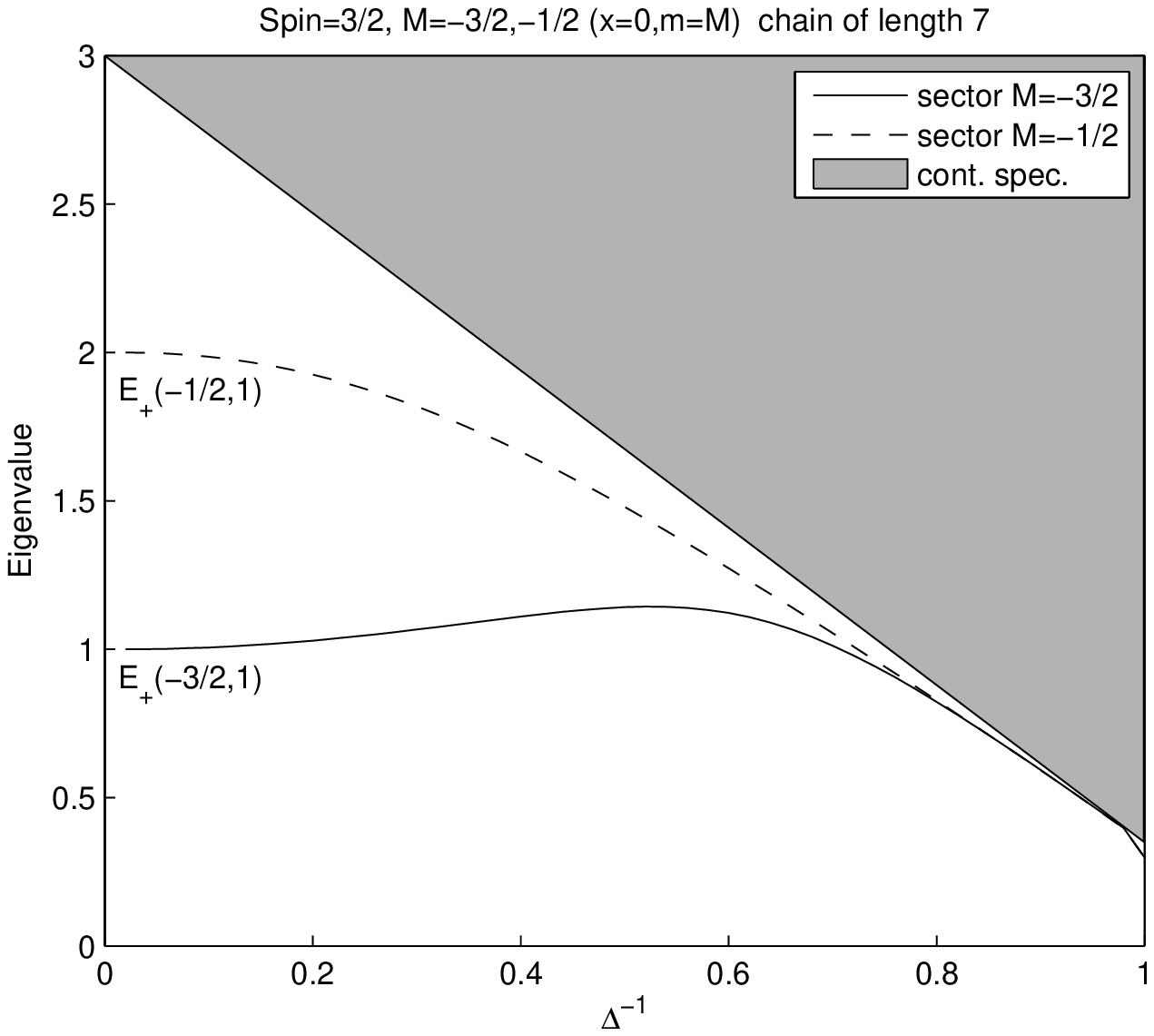,height=6cm,width=6cm}
\epsfig{file=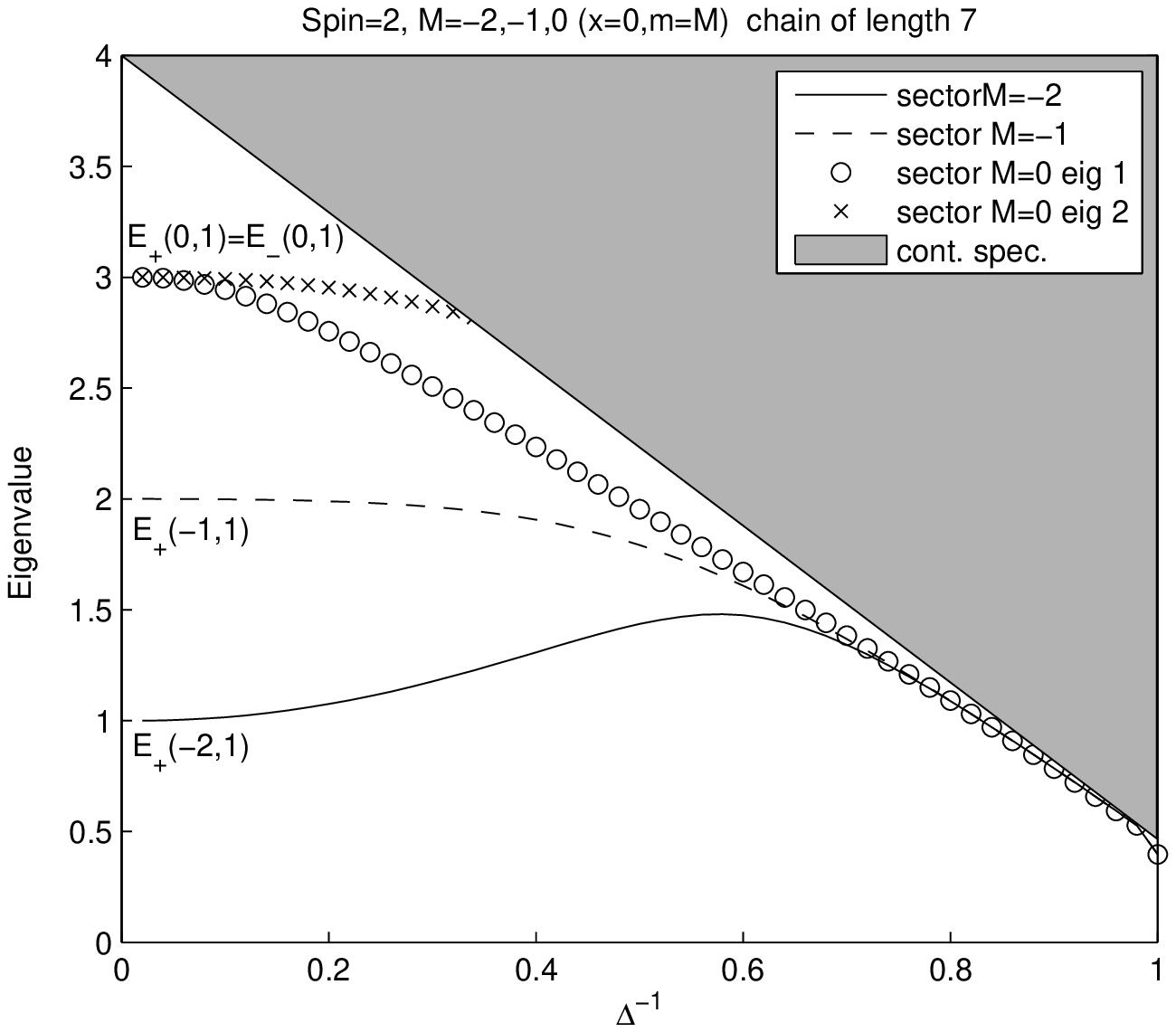,height=6cm,width=6cm}
\caption{The low-lying spectrum of the XXZ model for $J=\frac{3}{2}$ and $J=2$ for various sectors. Note the two-fold degeneracy for spin $2$ in the sector $M=0$ corresponding to $E_+(0,1)=E_-(0,1)$. The shaded region contains those eigenvalues which converge to continuous spectrum in the thermodynamic limit.
We note that, for $\Delta=1$, the gap vanishes in the thermodynamics limit.}
\end{center}
\end{figure}

%%%%%%%%%%%%%%%%%%%%%%%%%%%%%%%%%%%%%%%%%%%%%%%%%%%
%%%%
%%%%  SET-UP
%%%%
%%%%%%%%%%%%%%%%%%%%%%%%%%%%%%%%%%%%%%%%%%%%%%%%%%%
\Section{Set-up}
We study the spin $J$ ferromagnetic XXZ model on the one-dimensional lattice 
$\mathbb{Z}$. The local Hilbert space for a single site $\alpha$ is $\mathcal{H}_\alpha = \mathbb{C}^{2\J +1}$ with $\Spin \in \frac{1}{2} \mathbb{N} = \{0,\frac{1}{2},1,\frac{3}{2},2,\dots\}$. We consider the Hilbert space for a finite chain on the sites $[-L,L] = \{-L,-L+1,\dots,+L\}$.
This is $\mathcal{H}_{[-L,L]}=\bigotimes_{\alpha=-L}^L\mathcal{H}_\alpha$. 
The Hamiltonian of the spin-$\Spin$ XXZ model is
\begin{equation}
\begin{gathered}
H_L(\Delta^{-1})\, 
=\, \sum_{\alpha=-L}^{L-1} h_{\alpha,\alpha+1}(\Delta^{-1})\, ,\\
h_{\alpha,\alpha+1}(\Delta^{-1})\, 
=\, \Spin^2-S_\alpha^3 S_{\alpha+1}^3 - \Delta^{-1}(S_\alpha^1 S_{\alpha+1}^1 +  S_\alpha^2 S_{\alpha+1}^2)\, ,
\end{gathered}
\end{equation}
where $S_\alpha^1$,$S_\alpha^2$ and $S_\alpha^3$ are the spin-$\Spin$ matrices acting on the site $\alpha$, tensored with the identity operator acting on the other sites.
The main parameter of the model is the anisotropy $\Delta>1$ and we get the Ising limit as $\Delta\rightarrow\infty$. 
It is mathematically more convenient to work with the parameter $\Delta^{-1}$, which we then assume is in the interval $[0,1]$.
As we said, $\Delta^{-1}=0$ is the Ising limit, and $\Delta^{-1}=1$ is the isotropic XXX Heisenberg model.
It was shown \cite{PS1990,ASW1995,GW1995,KN1998}
that additional ground states emerge when we add particular boundary terms.
Examples of this are the kink and antikink Hamiltonians
\begin{align}
H^{\rm k}_L(\Delta^{-1})\ &=\, H_L(\Delta^{-1}) + \Spin \sqrt{1-\Delta^{-2}}\, (S_{-L}^3 - S_L^3)\\
H^{\rm ak}_L(\Delta^{-1})\, &=\, H_L(\Delta^{-1}) - \Spin \sqrt{1-\Delta^{-2}}\, (S_{-L}^3 - S_L^3)
\end{align}
It is easy to see that the \emph{kink} and \emph{anti-kink}
Hamiltonians are unitarily equivalent. 
We will be mainly interested in the \emph{kink} Hamiltonian with $L\ge 2$.
Note that, by a telescoping sum, we can absorb the boundary fields
into the local interactions:
\begin{gather}
\label{XXZ Hamiltonian}
H_L^{\rm k}(\Delta^{-1})\, 
=\, \sum_{\alpha=-L}^{L-1} h^{\rm k}_{\alpha,\alpha+1}(\Delta^{-1})\, ,\\
\notag
h^{\rm k}_{\alpha,\alpha+1}(\Delta^{-1})\, 
=\,  \Spin^2-S_\alpha^3 S_{\alpha+1}^3 - \Delta^{-1}(S_\alpha^1 S_{\alpha+1}^1 +  S_\alpha^2 S_{\alpha+1}^2) +
    \Spin \sqrt{1-\Delta^{-2}}\, (S_{\a}^3 - S_{\a+1}^3)
\end{gather}
The Ising kink Hamiltonian is the result of taking $\Delta\rightarrow\infty$, equivalently setting $\Delta^{-1}=0$,
namely it is
\begin{equation}
\label{Ising Hamiltonian}
\begin{gathered}
H_L^{\rm k}(0)\, 
=\, \sum_{\alpha=-L}^{L-1} h^{\rm k}_{\a,\a+1}(0)\, ,\\
h^{\rm k}_{\a,\a+1}(0)\, 
=\, (\Spin^2-S_\alpha^3 S_{\alpha+1}^3) + \J (S_{\a}^3 - S_{\a+1}^3)\\ 
=\, (\J + S_{\a}^3)(\J-S_{\a+1}^3)\, .
\end{gathered}
\end{equation}

Each of the Hamiltonians introduced above commutes with the total magnetization
$$
S^3_{\rm tot}\, =\, \sum_{\alpha =-L}^L S^3_\alpha\, .
$$
As indicated in the introduction, for each $M \in \{-\Spin (2L+1),\dots,\Spin (2L+1)\}$,
the corresponding sector is defined to be the eigenspace of $S^3_{\rm
  tot}$ with eigenvalue $M$; clearly, these are invariant 
subspaces for all the Hamiltonians introduced above.

The Ising basis is a natural orthonormal basis for $\Hil_{[-L,L]}$.
At each site we have an orthonormal basis of the 
Hilbert space $\mathcal{H}_\alpha$ given by the eigenvectors of
$S^3_\alpha$ and labeled according to their eigenvalues.
We will denote this by $S^3_\alpha \ket{m}_{\a} = m \ket{m}_{\a}$ 
for $m\in [-\Spin,\Spin]$, and $\a \in [-L,L]$. Here, and throughout
the remainder of the paper, we will use the notation 
$[-J,J]$ for the set $\{-J, -J+1, \dots, J\}$ as we have done with
$[-L,L]$.
Finally, there is an orthonormal basis of the entire Hilbert space
consisting of simple tensor product vectors: $\bigotimes_{\a=-L}^{L} \ket{m_{\a}}_{\a}$.

Also recall that the raising and lowering operators are defined such that
\begin{equation*}
S_{\alpha}^+ \ket{m}_{\a}\, =\,
\begin{cases}
\sqrt{\Spin(\Spin+1) - m(m + 1)}\, \ket{m+1}_{\a} & \text{ if $-\Spin \le m \leq \Spin-1$,}\\
0 &\text{if $m = \Spin$;}
\end{cases}
\end{equation*}
\begin{equation*}
S_{\alpha}^-\ket{m}_{\alpha}\, =\,
\begin{cases}
\sqrt{\Spin(\Spin+1) - m(m - 1)}\, \ket{m-1}_{\alpha} & \text{ if $-\Spin+1 \leq m \leq \Spin$,} \\
0 &\text{if $m = -\Spin$.}
\end{cases}
\end{equation*}
A short calculation shows that $S^1_{\a}$ and $S^2_{\a}$ are given by
$S^1_{a} = (S^+_{\a}+S^{-}_{\a})/2$ and $S^2_{a} = (S^+_{\a}-S^{-}_{\a})/2i$, and
$$
S_{\a}^{1} S_{\a+1}^1 + S_{\a}^2 S^2_{\a+1}\,
=\, \frac{1}{2} \left(S_{\a}^+ S_{\a+1}^- + S_{\a}^- S_{\a+1}^+\right)\, .
$$

%%%%%%%%%%%%%%%%%%%%%%%%%%%%%%%%%%%%%%%%%%%%%%%%%%%
%%%%
%%%%  MAIN THEOREM
%%%%
%%%%%%%%%%%%%%%%%%%%%%%%%%%%%%%%%%%%%%%%%%%%%%%%%%%
\Section{Main theorem}
\label{sec:Main Results}
Many of the results in this paper concern the \emph{kink} Hamiltonian given by ~(\ref{XXZ Hamiltonian}). 
We will study it as a perturbation of the Ising Hamiltonian ~(\ref{Ising Hamiltonian})
in the regime $0<\D^{-1}\ll 1$. 
We denote an Ising configuration as $\bm = (m_{\a})_{\a=-L}^{L}$, 
where $m_{\a} \in [-J,J]$ for each $\a$, and the corresponding basis 
vector as
$$
\ket{\bm}\, =\, \bigotimes_{\a=-L}^{L} \ket{m_{\a}}_{\a}\, .
$$
Observe that the Ising \emph{kink} Hamiltonian is diagonal with
respect to this basis, 
\begin{equation}
\label{eq:IsingEnergy}
\begin{gathered}
H^{\rm k}_L(0)\, \ket{\vec{m}}\, =\, E^{\rm k}(\vec{m})\, \ket{\vec{m}}\, ,\quad \text{ where}\\
E^{\rm k}(\vec{m})\, =\, \sum_{\a=-L}^{L-1} e^{\rm k}(m_{\a},m_{\a+1})\quad \text{ and}\\
e^{\rm k}(m_{\a},m_{\a+1})\, =\,  (J + m_{\a})(J-m_{\a+1})\, .
\end{gathered}
\end{equation}
Since each of the $e^{\rm k}(m_{\a}, m_{\a+1})$ are non-negative, it
is easy to see that the ground states of $H^{\rm k}_{L}(0)$ are all of the form
\begin{equation}
\Psi_0(x,m;L)  = \ket{\vec{m}}\, ,\quad \text{ where }\quad
m_{\a}\, =\, \begin{cases} -\J & \text{ for $-L\le \a\le x-1$}\\
m & \text{ for $\a=x$}\\
J & \text{for $x+1\le \a\leq L$,}\end{cases}
\end{equation}
with some $x \in [-L,L]$ and $m \in [-J,J]$. 
Note that the total magnetization corresponding to 
$\Psi_0(x,m;L)$ is $M=-2\Spin x + m$. As we will verify in
Proposition \ref{prop:Ising Groundstates}, it is easy to 
check that these ground states are unique per sector. 
We do point out that there is a slight ambiguity in 
the above labeling scheme, however, 
since $\Psi_0(x-1,-J;L)$ and $\Psi_0(x,J;L)$ coincide. 
Let us consider the following elementary result.

\begin{theorem}
\label{thm:EssentialSpec}
For $J \in \frac{1}{2} \N$, $L\ge 2$ and $M \in \{-J(2L+1),\dots,J(2L+1)\}$, the
eigenspace of $H^{\rm k}_{L}(0)$ corresponding to energy $E=2J$
has dimension at least equal to $2L+1$.
\end{theorem}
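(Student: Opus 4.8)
The plan is to reduce the statement to a counting exercise and then write down the relevant configurations by hand. Since $H^{\rm k}_L(0)$ is diagonal in the Ising basis, with $H^{\rm k}_L(0)\ket{\vec m}=E^{\rm k}(\vec m)\ket{\vec m}$ and $E^{\rm k}(\vec m)=\sum_\a (J+m_\a)(J-m_{\a+1})$ as recorded in (\ref{eq:IsingEnergy}), it suffices to produce $2L+1$ pairwise distinct Ising configurations $\vec m$ with $E^{\rm k}(\vec m)=2J$: the associated basis vectors $\ket{\vec m}$ are then mutually orthogonal eigenvectors of $H^{\rm k}_L(0)$ at eigenvalue $2J$, so the $E=2J$ eigenspace has dimension at least $2L+1$.

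To construct these configurations I would exploit the fact that each bond term $(J+m_\a)(J-m_{\a+1})$ is a product of two integers in $\{0,1,\dots,2J\}$, hence a non-negative integer; to make the total energy equal to exactly $2J$ by perturbing a uniform configuration I want a single defect that raises one bond to the value $2J$ and leaves all the others at $0$. Lowering one spin from $J$ to $J-1$ achieves this provided its left neighbour is also $J$, since then the bond on its left becomes $(J+J)(J-(J-1))=2J$ and the bond on its right $(J+J-1)(J-J)=0$. Accordingly, for each site $x\in\{-L+1,-L+2,\dots,L\}$ let $\phi_x$ be the configuration equal to $J$ everywhere except for the value $J-1$ at site $x$; since $x-1\ge -L$ the required left neighbour is present, so $E^{\rm k}(\phi_x)=2J$. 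These are $2L$ distinct configurations --- they differ in the position of the unique non-maximal spin --- and all lie in the sector $M=J(2L+1)-1$. For the remaining configuration I would use the mirror-image version: let $\phi_*$ be equal to $-J$ everywhere except for the value $-J+1$ at the left endpoint $-L$; now the activated bond is the one on the right, $(J+(-J+1))(J-(-J))=2J$, there is no bond to the left of $-L$, and $E^{\rm k}(\phi_*)=2J$. Since $\phi_*$ lies in the sector $M=-J(2L+1)+1\ne J(2L+1)-1$ (using $L\ge 1$), it differs from every $\phi_x$, and the $2L+1$ configurations $\phi_{-L+1},\dots,\phi_L,\phi_*$ complete the argument.

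I expect no real obstacle here; the only point that needs care is the placement of the defect. A spin lowered or raised right next to the domain wall, or at the ``wrong'' endpoint of a uniform block, activates no bond and merely reproduces a ground state --- for instance the configuration equal to $J$ everywhere except for the value $J-1$ at site $-L$ is the ground state $\Psi_0(-L,J-1;L)$, not an $E=2J$ state --- which is precisely why the family $\{\phi_x\}$ yields only $2L$ excited vectors and the extra vector $\phi_*$ is needed. I would also point out in the write-up that the bound is about $H^{\rm k}_L(0)$ on the full space $\Hil_{[-L,L]}$, not about a fixed sector: for $M$ near the extremes no single sector $\Hil_M$ carries this many $E=2J$ eigenvectors (for example $\Hil_{J(2L+1)-1}$ contains only $2L+1$ Ising vectors altogether, and its $E=2J$ eigenspace has dimension $2L$), whereas in bulk sectors the degeneracy is in fact much larger than $2L+1$, so the constructed family is far from optimal but more than sufficient.
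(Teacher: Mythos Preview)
Your reading of the theorem as a statement about the full Hilbert space $\Hil_{[-L,L]}$ is not the paper's intent: the quantifier over $M$ is meant to fix a sector, and the paper's proof works entirely inside $\Hil_M$. It starts from the unique ground state $\Psi_0(x,m;L)$ in that sector and, for each site $y$ with $|y-x|\ge 2$, moves one unit of magnetization between the kink position and $y$ (lowering $m_y$ from $J$ to $J-1$ and raising $m_x$ to $m+1$ if $y\ge x+2$; raising $m_y$ from $-J$ to $-J+1$ and lowering $m_x$ to $m-1$ if $y\le x-2$), producing an $E=2J$ configuration still in $\Hil_M$. Your $2L+1$ vectors live in two different sectors, so as written they do not prove the sector-by-sector claim---and that claim is what the application actually needs, since the conclusion drawn immediately afterwards is that $E=2J$ becomes essential spectrum in the thermodynamic limit \emph{within each sector}.

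Your underlying mechanism---a single spin defect that activates exactly one bond to the value $2J$---is precisely the paper's idea; the only correction required is to place the defect against the kink background $\Psi_0(x,m;L)$ rather than against the uniform all-$J$ or all-$(-J)$ state, so that the total magnetization stays fixed. You are, however, right to be suspicious of the constant $2L+1$: the paper's own construction yields only $2L-2$ such vectors in a generic sector (or $2L-1$ when $M\equiv J\bmod 2J$), and your example $M=J(2L+1)-1$ shows the $E=2J$ eigenspace there has dimension exactly $2L$. The stated bound $2L+1$ thus appears to be a slip in the paper; what is actually proved, and all that is used, is linear growth of the degeneracy in $L$.
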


Because of this theorem, we see that, in the $L \to \infty$ limit, 
the energy $E=2J$ is essential spectrum; an eigenvalue
with infinite multiplicity. As our interest is in perturbation 
theory, it is natural for us to restrict our attention to energies 
strictly between $0$ and $2J$. We will call the corresponding 
eigenvectors ``low energy excitations''.

We will now describe all the low energy excitations for the Ising kink
Hamiltonian. In order to do so, it is convenient to introduce a 
family of eigenvectors which contains all possible low energy excitations.

Let $x$ be any site away from the boundary, i.e. take $x \in [-L+1, L-1]$, and 
choose $m \in [-J,J]$. These choices specify a sector $M=-2Jx+m$ and
a groundstate $\Psi_0(x,m;L)$. If $m<J$, we define the following
vectors:
For $1 \leq n \leq J-m$ set 
\begin{equation}
\label{eq:psi+}
%\begin{gathered}
\Psi_n^{+}(x,m;L)  \, = \, \ket{\vec{m}}\, ,\quad \text{where}\quad
m_{\a}\, =\, \begin{cases} -J & \text{ if $\a\le x-1$,}\\
m+ n & \text{if $\a=x$,}\\
J-n & \text{if $\a = x + 1$,}\\
J & \text{ if $\a \geq x+2$.}
\end{cases}
%\end{gathered}
\end{equation}
If $-J < m$, we define the following
vectors:
For $1 \leq n \leq J+m$ set 
\begin{equation}
\label{eq:psi-}
%\begin{gathered}
\Psi_n^{-}(x,m;L)  \, = \, \ket{\vec{m}}\, ,\quad \text{where}\quad
m_{\a}\, =\, \begin{cases} -J & \text{ if $\a\le x-2$,}\\
-J+ n & \text{if $\a=x-1$,}\\
m-n & \text{if $\a = x $,}\\
J & \text{ if $\a \geq x+1$.}
\end{cases}
%\end{gathered}
\end{equation}

\begin{figure}
\begin{center}
\epsfig{file=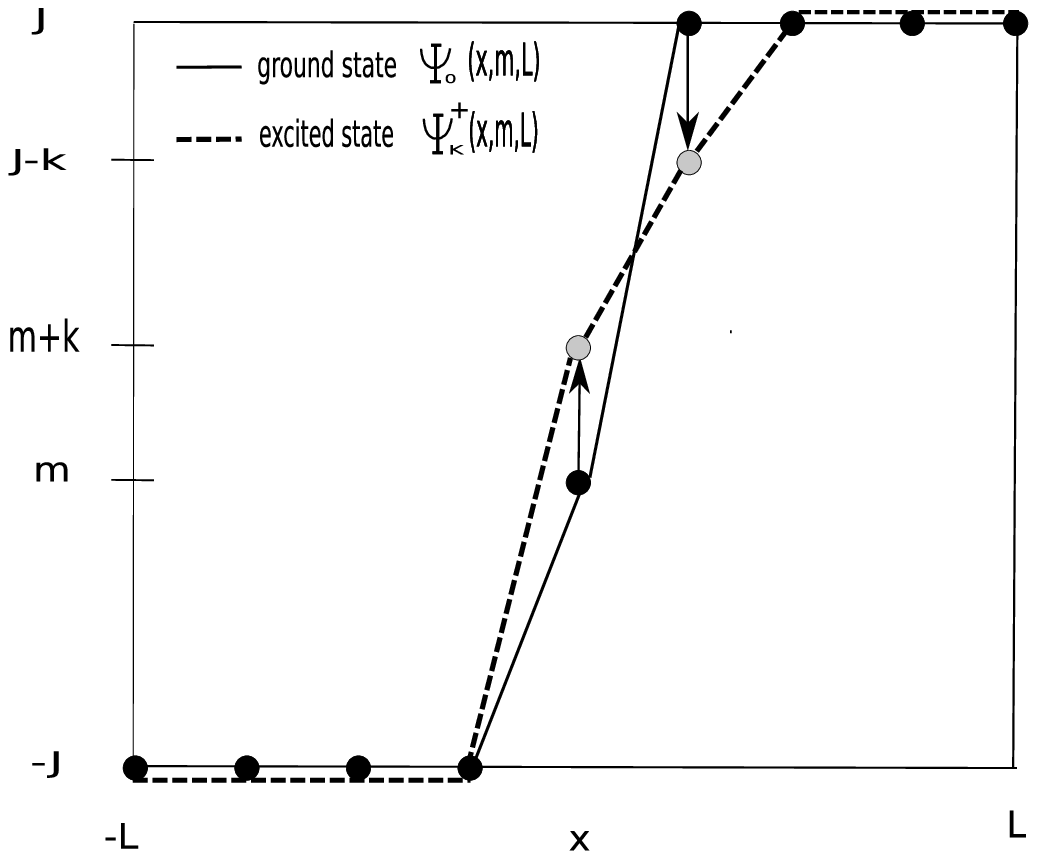,height=5cm,width=7cm}
\epsfig{file=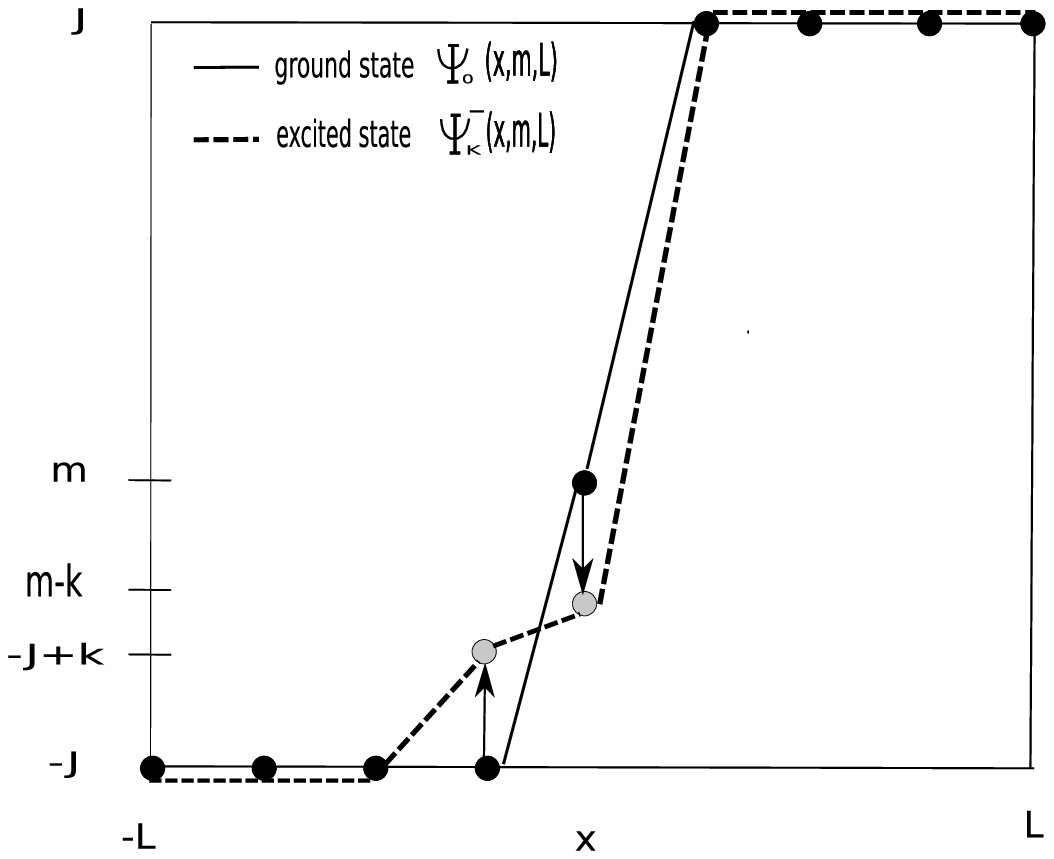,height=5cm,width=7cm}
\caption{\label{fig:Ising_excitations1}
The two kinds of low lying Ising excitations. The left picture 
belongs to the set $K_+(m)$ and the right to the set $K_-(m)$.}
\end{center}
\end{figure}

We call these two sequences of Ising basis vectors the ``localized kink excitations''. Clearly, these vectors $\{ \Psi_n^{\pm}(x,m;L) \}$
have the same total magnetization $M$ and moreover,
\begin{equation}
\label{eq:LocalizedKinkEnergies}
H^{\rm k}_L(0)\, \Psi^{\pm}_n(x,m;L)\, =\, E_{\pm}(m,n)\, \Psi^{\pm}_n(x,m;L),\,
\mbox{with }
E_{\pm}(m,n)\, =\, n^2 + (J\pm m)n\, ,
\end{equation}
where $E_{\pm}(m,n)$ has been calculated using (\ref{eq:IsingEnergy});
note that in each case there is only one non-zero term. In Figure 
\ref{fig:Ising_excitations1} we have shown two ground states, in the left and right graphs, as well as two low excitations. 
These are the schematic diagrams for $\Psi_0(x,m)$, in solid in both pictures, and $\Psi_k^{+}(x,m)$ and $\Psi_k^{-}(x,m)$, in a dashed line, in the left and right, respectively

We now define two sets of labels:
\begin{equation}
\label{eq:Kset}
K_{\pm}(m)\, =\, \{n\, :\, n \in \N\, ,\ 1\leq n\leq J\mp m\, ,\ E_{\pm}(m,n)<2J\}\, .
\end{equation}
Depending on $m$ and $J$, neither, one, or both of these sets may be
empty. We have the following theorem.

\begin{theorem}
\label{thm:Ising Excitations}
(1) The low energy excitations of $H^{\rm k}_{L}(0)$ 
form a subset of the localized kink excitations introduced in
(\ref{eq:psi+}) and (\ref{eq:psi-}) above.\\
(2) For $M$ of the form $M=-2Jx+m$
with some $x \in [-L+1,L-1]$ and $m \in [-J,J]$, the set of low energy excitations equals
$$
\{\Psi^{+}_n(x,m;L)\, :\, n \in K_+(m)\} \cup \{\Psi^{-}_n(x,m;L)\, :\, n \in K_-(m)\}\, .
$$
This is a nonempty set except in the following two cases:
$J=\frac{1}{2}$, or $J =1$, and $m=0$.\\
(3) The low energy excitations of $H_L^k(0)$ are at most 
two-fold degenerate. The first excitation is simple, except for the case that $J$ is an integer $J>1$, and $M = 0$  mod $2J$.
\end{theorem}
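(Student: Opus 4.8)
The plan is to prove all three parts of Theorem~\ref{thm:Ising Excitations} by a careful case analysis of the diagonal energy function $E^{\rm k}(\vec{m})=\sum_\alpha (J+m_\alpha)(J-m_{\alpha+1})$, exploiting that each summand $e^{\rm k}(m_\alpha,m_{\alpha+1})=(J+m_\alpha)(J-m_{\alpha+1})$ is a nonnegative integer which vanishes precisely when $m_\alpha=-J$ or $m_{\alpha+1}=J$.

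For part (1), I would argue that any Ising basis vector $\ket{\vec m}$ with $0<E^{\rm k}(\vec m)<2J$ must be one of the localized kink excitations. The key observation is that if two distinct bonds $(\alpha,\alpha+1)$ contribute nonzero terms, then each contributes at least $\tfrac14$ (for half-integer $J$ the minimal nonzero value of $(J+m_\alpha)(J-m_{\alpha+1})$ is... actually it is $1$ when $J$ is an integer and $\tfrac14$ only if $J$ is a half-odd-integer --- but in fact the minimum nonzero value of $(J+a)(J-b)$ over $a,b\in[-J,J]$ with $a\ge -J+1$ or $b\le J-1$ ... the smallest is $\min\{2J\cdot 1,\dots\}$). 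More carefully: $(J+m_\alpha)\ge 1$ whenever $m_\alpha>-J$, and likewise $(J-m_{\alpha+1})\ge 1$ whenever $m_{\alpha+1}<J$, so each nonzero bond term is at least... if exactly one of the two factors is ``small'' the term is at least $1\cdot 2J=2J$ unless the other factor is also pinned near its extreme. So the only way to stay strictly below $2J$ is to have exactly one nonzero bond, and on that bond one needs $(J+m_\alpha)(J-m_{\alpha+1})<2J$ with both factors at least $1$; writing $m_\alpha=-J+p$, $m_{\alpha+1}=J-r$ with $p,r\ge 1$, the bond energy is $pr$ and the constraint $pr<2J$ together with the requirement that \emph{all other bonds} vanish forces the configuration to be a pure kink ($-J$'s then $J$'s) everywhere except possibly at two adjacent sites --- and this is exactly the form~(\ref{eq:psi+})/(\ref{eq:psi-}) with $n=p=r$. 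I would spell out that having a single ``defect'' at a site with $m_x=m\notin\{-J,J\}$ forces the neighbor to absorb it, giving the two families according to whether the defect bond is $(x,x+1)$ or $(x-1,x)$.

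For part (2), given $M=-2Jx+m$ I would enumerate which $\Psi^\pm_n$ actually have energy below $2J$: by~(\ref{eq:LocalizedKinkEnergies}), $E_\pm(m,n)=n^2+(J\pm m)n<2J$, which is exactly the defining condition of $K_\pm(m)$ in~(\ref{eq:Kset}), so the low energy excitations in that sector are precisely $\{\Psi^+_n:n\in K_+(m)\}\cup\{\Psi^-_n:n\in K_-(m)\}$. Emptiness: since $n=1$ gives $E_\pm(m,1)=1+(J\pm m)$, this is $<2J$ iff $J\pm m>1-2J$... I should instead directly check: $E_+(m,1)<2J \iff J+m<2J-1 \iff m<J-1$, and symmetrically $E_-(m,1)<2J\iff m>1-J$; combined with the range constraints $1\le n\le J\mp m$, the only way both $K_+(m)$ and $K_-(m)$ are empty is $J=\tfrac12$ (where $J\mp m\le 1$ and the energy condition also fails), or $J=1$ with $m=0$ (then $E_\pm(0,1)=2=2J$, not strictly less). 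Every other $(J,m)$ admits at least one valid $n$.

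For part (3), I would bound the multiplicity by noting that within a fixed sector $M$ and fixed ``kink location data'', the value $E_\pm(m,n)$ together with the sign $\pm$ and $n$ determines the configuration; the only source of degeneracy is that $\Psi^+_{n}$ and $\Psi^-_{n'}$ can share an energy value, and also that the labeling $(x,m)$ is redundant ($\Psi_0(x-1,-J)=\Psi_0(x,J)$). Fixing $M$, write $M=-2Jx+m$ with $x\in[-L+1,L-1]$, $m\in[-J,J]$; one checks $E_+(m,n)=E_-(m',n')$ forces, after using $M=-2Jx+m=-2Jx'+m'$, that the two excitations coincide unless a genuine coincidence $n^2+(J+m)n=n'^2+(J-m')n'$ occurs. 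A short computation shows this can happen for at most one pair, giving at most two-fold degeneracy; and the lowest such eigenvalue is $n=1$, $E_\pm(m,1)=1+J\pm m$, which are equal iff $m=0$, requiring $M\equiv 0\bmod 2J$ and $J$ an integer (for half-integer $J$, $m=0$ is impossible in the allowed set since $m\in[-J,J]\cap(\mathbb{Z}-J+\tfrac12)$... rather $m$ ranges over values with $m\equiv J\bmod 1$, so $m=0$ needs $J\in\mathbb{Z}$), and $J>1$ so that $E_\pm(0,1)=1+J<2J$. I expect the main obstacle to be the bookkeeping in part (1): carefully ruling out configurations with two or more nonzero bonds, since one must handle the edge cases where a ``small'' factor multiplies a factor that is itself only $1$ or $2$ away from extreme, and verify that the combined energy of any such multi-bond configuration is already $\ge 2J$; getting the inequalities tight (rather than merely $\ge$ something smaller) is where care is needed.
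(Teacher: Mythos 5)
Your parts (2) and (3) are essentially the paper's own argument and are sound: once part (1) is in hand, the identification of the low-energy set with $\{\Psi^{\pm}_n : n\in K_\pm(m)\}$ is immediate from the energy formula, your nonemptiness case analysis (failing exactly for $J=\tfrac12$ and for $J=1$, $m=0$) is correct, the at-most-two-fold bound follows from strict monotonicity of $n\mapsto E_\pm(m,n)$, and the first-excitation criterion ($E_+(m,1)=E_-(m,1)$ iff $m=0$, which forces $J$ integer, $J>1$ for the energy to lie below $2J$, and $M\equiv 0 \bmod 2J$) matches the paper's conclusion.

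The genuine gap is the key step of part (1). Your claim that total energy strictly between $0$ and $2J$ forces \emph{exactly one} nonzero bond does not follow from the local estimate you sketch. Since $m_\alpha\in\{-J,-J+1,\dots,J\}$, both factors $J+m_\alpha$ and $J-m_{\alpha+1}$ are nonnegative integers, so the minimal nonzero bond energy is $1$ (take $m_\alpha=-J+1$, $m_{\alpha+1}=J-1$), not $\tfrac14$ and certainly not $2J$; hence two nonzero bonds could naively contribute as little as $2<2J$ once $J\ge\tfrac32$, and no bond-by-bond inequality rules this out. The exclusion is structural, and this is precisely the content of the paper's Lemma \ref{lemma:Ising Energy}: (i) any descending pair $m_x>m_{x+1}$ by itself costs at least $2J$, and (ii) for an interior site $x$ with $m_{x-1}>-J$ and $m_{x+1}<J$, the two bonds adjacent to $x$ together cost at least $2J$; a configuration violating neither condition is nondecreasing with at most two sites strictly between $-J$ and $J$, i.e.\ a ground state or a localized kink excitation. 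You explicitly defer exactly this ``bookkeeping'' as the main obstacle without supplying it, so the central combinatorial content of part (1) is missing from the proposal, and the heuristic you do give (``one small factor $\Rightarrow$ term at least $1\cdot 2J$'') is not a valid substitute. A minor slip in the same step: if the single nonzero bond carries $m_\alpha=-J+p$, $m_{\alpha+1}=J-r$ with $p,r\ge 1$, the resulting vector is $\Psi^{+}_n$ or $\Psi^{-}_n$ with $n=\min(p,r)$ (the difference $p-r$ fixing $m$ and the sign), not with $n=p=r$.
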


\begin{remark}
\label{rem:simpler}
The two-fold degeneracy of the first excited state, guaranteed by part (3) of the theorem, occurs due to the spin flip and reflection symmetry. All other degeneracies with energy $<2J$ occur as follows.
Suppose $2J=ab$ for integers $2\leq a\leq b$.
In this case, let $m=J-2+a-b$.
Then $\Psi_1^+(x,m;L)$ is degenerate with $\Psi_{a-1}^-(x,m;L)$,
both with energy $2J-1+a-b$.
Similarly, $\Psi_1^-(x,-m;L)$ is degenerate with $\Psi_{a-1}^+(x,-m;L)$
for the same energy.
\end{remark}

%The two-fold degeneracy when $J>1$ is an integer and $M= 0 \mod 2J$ occurs due to 
%a spin flip and reflection symmetry. The excitations higher than the first 
%can be two-fold degenerate due to other reasons.  From the proof
%of (3) one can see that if $J+m=2$ and $2J$ is a perfect square, then there 
%is a two fold degeneracy. This case is only one example of many other ``accidental degeneracies''. 
%These degeneracies are not related by symmetry, and are of less importance than
%the degeneracies mentioned in part (3) of the theorem.

Next we consider the perturbed Hamiltonian $H^k_L(\Delta^{-1})$. 
As is discussed in Section \ref{sec:perturbation_theory}, each low-lying eigenvalue
$E_\pm(m,n)$, associated to some $n \in K_{\pm}(m)$,
is isolated from the rest of the spectrum by an isolation distance $d_\pm(m,n)
\geq 1$, independent of $L$, which is defined by
$$
d_\pm(m,n)=\inf_{E\in \sigma(H^k_L(0))\setminus \{E_\pm(m,n)\}}
\vert E-E_\pm(m,n)\vert\, .
$$

\begin{theorem}
\label{thm:Main theorem}
Let $L\ge 2$, and fix $J\ge\frac{3}{2}$. For any $n \in K_{\pm}(m)$,
consider the interval
$I=[E_\pm(m,n)-d_\pm(m,n)/2,E_\pm(m,n)+d_\pm(m,n)/2]$ about the low
lying energy $E_{\pm}(m,n)$. The spectral projection of $H_L^{\rm k}(
\Delta^{-1})$ onto $I$ is analytic for large enough values of
$\Delta$. In particular, the dimension of the spectral projection onto
$I$ is constant for this range of $\Delta$. 
\end{theorem}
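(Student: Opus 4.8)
The plan is to realize $H_L^{\rm k}(\Delta^{-1})$ as an analytic perturbation of the diagonal operator $H_L^{\rm k}(0)$ and apply the Riesz projection formula, being careful that every estimate is uniform in $L$. Set $V_L(\Delta^{-1}) := H_L^{\rm k}(\Delta^{-1}) - H_L^{\rm k}(0)$; comparing (\ref{XXZ Hamiltonian}) with (\ref{Ising Hamiltonian}) and telescoping the boundary fields,
$$
V_L(\Delta^{-1})\, =\, -\Delta^{-1}\sum_{\a=-L}^{L-1}\bigl(S_\a^1 S_{\a+1}^1 + S_\a^2 S_{\a+1}^2\bigr)\, +\, \Spin\bigl(\sqrt{1-\Delta^{-2}}-1\bigr)\bigl(S_{-L}^3 - S_{L}^3\bigr)\, .
$$
The second term has norm at most $2\Spin^2\Delta^{-2}$, and $\Delta^{-1}\mapsto\sqrt{1-\Delta^{-2}}$ is analytic on a fixed disk about $\Delta^{-1}=0$, so $\Delta^{-1}\mapsto V_L(\Delta^{-1})$ is operator-valued analytic near $0$ with $V_L(0)=0$. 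The first term is unbounded as $L\to\infty$; by Theorem~\ref{thm:relbd} it is relatively bounded with respect to $H_L^{\rm k}(0)$ with $L$-independent constants, so, absorbing the explicit $\Delta^{-1}$ and $\Delta^{-2}$ prefactors, one has
$$
\|V_L(\Delta^{-1})\psi\|\, \le\, a(\Delta^{-1})\,\|H_L^{\rm k}(0)\psi\|\, +\, b(\Delta^{-1})\,\|\psi\|
$$
with $a(\Delta^{-1}),\,b(\Delta^{-1})\to 0$ as $\Delta\to\infty$, uniformly in $L$.

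Next I would fix $n\in K_{\pm}(m)$ and let $\Gamma$ be the positively oriented circle of radius $d_\pm(m,n)/2$ about $E_\pm(m,n)$. By the definition of the isolation distance, $E_\pm(m,n)$ is the only point of $\sigma(H_L^{\rm k}(0))$ within $d_\pm(m,n)/2$ of itself, so $\|(z-H_L^{\rm k}(0))^{-1}\|\le 2/d_\pm(m,n)\le 2$ on $\Gamma$; moreover $|z|\le E_\pm(m,n)+d_\pm(m,n)/2<3\Spin$ there, since $E_\pm(m,n)<2\Spin$ and $d_\pm(m,n)\le E_\pm(m,n)$ (the ground energy $0$ lies in $\sigma(H_L^{\rm k}(0))$). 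Both bounds are uniform in $L$. Factoring
$$
z-H_L^{\rm k}(\Delta^{-1})\, =\, \bigl(\unity - V_L(\Delta^{-1})(z-H_L^{\rm k}(0))^{-1}\bigr)\bigl(z-H_L^{\rm k}(0)\bigr)
$$
and combining the relative bound with the identity $H_L^{\rm k}(0)(z-H_L^{\rm k}(0))^{-1}=-\unity+z(z-H_L^{\rm k}(0))^{-1}$, one gets $\|V_L(\Delta^{-1})(z-H_L^{\rm k}(0))^{-1}\|\le C_{\Spin}\,\bigl(a(\Delta^{-1})+b(\Delta^{-1})\bigr)$ on $\Gamma$, with $C_{\Spin}$ independent of $L$. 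Hence there is a threshold $\Delta_0$, depending only on $\Spin$, so that for all $\Delta>\Delta_0$ this norm is $<1$ for every $z\in\Gamma$ and every $L\ge 2$; consequently $\Gamma\subset\rho(H_L^{\rm k}(\Delta^{-1}))$ and the Neumann series
$$
(z-H_L^{\rm k}(\Delta^{-1}))^{-1}\, =\, (z-H_L^{\rm k}(0))^{-1}\sum_{k\ge0}\bigl(V_L(\Delta^{-1})(z-H_L^{\rm k}(0))^{-1}\bigr)^k
$$
converges uniformly on $\Gamma$.

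It then remains to conclude. Define $P_L(\Delta^{-1})=\frac{1}{2\pi i}\oint_\Gamma(z-H_L^{\rm k}(\Delta^{-1}))^{-1}\,dz$; for $\Delta>\Delta_0$ this is well defined, and integrating the Neumann series term by term, together with the analyticity of $\Delta^{-1}\mapsto V_L(\Delta^{-1})$, shows that $P_L$ is analytic in $\Delta^{-1}$ on a disk about $0$ of $L$-independent radius, hence analytic in $\Delta$ on $(\Delta_0,\infty)$. Since $H_L^{\rm k}(\Delta^{-1})$ is self-adjoint for real $\Delta$ its spectrum is real, and as $\Gamma$ bounds the disk whose real diameter is $I$ and carries no spectrum, $P_L(\Delta^{-1})$ coincides with the spectral projection of $H_L^{\rm k}(\Delta^{-1})$ onto $I$. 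A norm-continuous family of projections over a connected parameter set has constant rank, and at $\Delta^{-1}=0$ that rank equals the dimension of the $E_\pm(m,n)$-eigenspace of $H_L^{\rm k}(0)$, which is finite (at most $2$, by part (3) of Theorem~\ref{thm:Ising Excitations}); this gives the asserted constancy of the dimension.

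The genuine difficulty is the uniformity in $L$: straightforward analytic perturbation theory would only yield a radius of convergence of order $\|V_L(\Delta^{-1})\|^{-1}$, which deteriorates like $1/L$ and is useless for the thermodynamic limit. The two facts that rescue the argument are the $L$-independent lower bound $d_\pm(m,n)\ge 1$ on the isolation distance and, more essentially, the $L$-independent relative bound supplied by Theorem~\ref{thm:relbd}; granted those, the present statement is a routine Riesz-projection computation. I expect the only points requiring care to be the bookkeeping in passing from the relative bound to the $\Gamma$-uniform estimate on $\|V_L(\Delta^{-1})(z-H_L^{\rm k}(0))^{-1}\|$, and the verification that no eigenvalue of $H_L^{\rm k}(\Delta^{-1})$ lies on $\Gamma$ (so that $P_L$ really is the projection onto $I$) — both of which follow from the same estimates.
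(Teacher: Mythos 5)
Your proposal is correct and follows essentially the same route as the paper: the same splitting of $H_L^{\rm k}(\Delta^{-1})$ into the Ising kink Hamiltonian plus the relatively bounded terms controlled by Theorem~\ref{thm:relbd}, the same circle of radius $d_\pm(m,n)/2$ with the $L$-independent bounds $\|R(\xi)\|\le 2/d_\pm(m,n)$ and $|\xi|\le E_\pm(m,n)+d_\pm(m,n)/2$, and the same Neumann-series/Riesz-projection argument giving analyticity and constancy of the rank. The only differences are presentational (you phrase the smallness condition as $a(\Delta^{-1}),b(\Delta^{-1})\to 0$ rather than extracting the explicit constants $C_1,C_2$ and the bound (\ref{eq:dbd})), so there is nothing to correct.
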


\begin{remark}
Our estimates yield a lower bound on $\Delta$ as is provided by
(\ref{eq:dbd}). A slightly worse bound demonstrates that taking $\Delta >
18 J^{5/2}$ suffices, but we do not expect either estimate to be sharp.
\end{remark}

The above theorem confirms the structure of the spectrum shown in Figure 
\ref{fig:xxz_spec}.
Moreover, since our numerical calculations indicate that some of the eigenvalues
enter the continous spectrum, we do not expect this type of perturbation 
theory to work for the entire range of $\Delta^{-1}\in [0,1]$. 

%%%%%%%%%%%%%%%%%%%%%%%%%%%%%%%%%%%%%%%%%%%%%%%%%%%
%%%%
%%%%  EXCITATIONS OF THE ISING MODEL
%%%%
%%%%%%%%%%%%%%%%%%%%%%%%%%%%%%%%%%%%%%%%%%%%%%%%%%%
\Section{Proof of Theorems ~\ref{thm:EssentialSpec}  and  ~\ref{thm:Ising Excitations} (Excitations of the Ising Model)}
\label{sec:Ising}

In this section, we will focus on the Ising kink Hamiltonian $H_L^{\rm
  k}(0)$, as introduced in (\ref{Ising Hamiltonian}), for a fixed $J \in \frac{1}{2} \N$.
The ground states can be characterized as follows:
\begin{proposition} 
\label{prop:Ising Groundstates}
The ground states of the Ising kink Hamiltonian are all of the form
\begin{equation}
\Psi_0(x,m;L)\, = \ket{\bm}\, ,\quad \text{ where }\quad
m_{\a}\, =\, \begin{cases} -\J & \text{ for $\a=-L,\dots,x-1$,}\\
m & \text{ for $\a=x$,}\\
+\J & \text{ for $\a=x+1,\dots,L$;}\end{cases}
\end{equation}
for $x \in [-L,\dots,L]$ and $m \in [-J,J]$.
Moreover, there is exactly one ground state in each sector;
the total magnetization  eigenvalue for $\Psi_0(x,m;L)$ is $M=2Jx+m$.
\end{proposition}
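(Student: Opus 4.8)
The plan is to exploit the fact, already recorded in~(\ref{eq:IsingEnergy}), that $H^{\rm k}_L(0)$ is \emph{diagonal} in the Ising basis: $H^{\rm k}_L(0)\ket{\bm} = E^{\rm k}(\bm)\ket{\bm}$ with $E^{\rm k}(\bm) = \sum_{\a=-L}^{L-1}(\J+m_\a)(\J-m_{\a+1})$. Since $m_\a\in[-\J,\J]$ for every $\a$, each summand is non-negative, so $E^{\rm k}(\bm)\ge 0$, while the configuration $\Psi_0(-L,\J;L)$ (all spins $+\J$) has $E^{\rm k}=0$; hence the ground-state energy is $0$. Because the Hamiltonian is diagonal, its ground-state eigenspace is spanned by exactly those basis vectors $\ket{\bm}$ with $E^{\rm k}(\bm)=0$, i.e. with $(\J+m_\a)(\J-m_{\a+1})=0$ for every bond $\a\in[-L,L-1]$; equivalently, for each bond either $m_\a=-\J$ or $m_{\a+1}=\J$. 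This diagonality is the point that lets us conclude we have found \emph{all} ground states, not merely some.

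Next I would show that the configurations satisfying this bond condition are precisely the $\Psi_0(x,m;L)$. Let $\bm$ satisfy it. If $m_\a=-\J$ for all $\a$, then $\bm=\Psi_0(L,-\J;L)$ and we are done; otherwise let $x\in[-L,L]$ be the least index with $m_x\ne-\J$, so $m_\a=-\J$ for all $\a\le x-1$ by minimality. Since $\J+m_x>0$, the bond condition at $x$ forces $m_{x+1}=\J$ (when $x<L$), and then, since $\J+\J>0$, the bond conditions at $x+1,x+2,\dots$ force $m_\a=\J$ for all $\a\ge x+1$ by induction. Writing $m:=m_x\in[-\J,\J]$ exhibits $\bm=\Psi_0(x,m;L)$. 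Conversely each $\Psi_0(x,m;L)$ manifestly satisfies the bond condition, hence is a ground state. This proves the first two assertions of the proposition.

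Finally I would pin down one ground state per sector. Counting the $2L+1$ coordinates of $\Psi_0(x,m;L)$ directly yields its total magnetization $M=-2\J x+m$ (consistent with the formula stated in the introduction). For uniqueness, observe that $M\equiv m\pmod{2\J}$, so $m$ is determined by $M$ once one fixes a complete residue system of size $2\J$ inside $\{-\J,\dots,\J\}$, and then $x=(m-M)/(2\J)$ is determined; checking that this $x$ falls in $[-L,L]$ for every admissible $M\in\{-\J(2L+1),\dots,\J(2L+1)\}$ gives existence. Any two representations $(x,m)$ and $(x',m')$ with $x,x'\in[-L,L]$ and $m,m'\in[-\J,\J]$ that yield the same $M$ satisfy $2\J(x-x')=m'-m$ with $|m'-m|<2\J$ unless $\{m,m'\}=\{-\J,\J\}$ — which is exactly the noted coincidence $\Psi_0(x-1,-\J;L)=\Psi_0(x,\J;L)$, the same vector — so each sector contains exactly one ground state. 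The argument is essentially immediate from diagonality and non-negativity of the bond energies; the only mildly delicate points are the fully polarized boundary cases ($x=\pm L$) and bookkeeping the labeling ambiguity, which I would spell out carefully rather than treat as the crux.
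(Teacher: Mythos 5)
Your proof is correct and follows essentially the same route as the paper: non-negativity of the bond energies $(\J+m_\a)(\J-m_{\a+1})$ plus diagonality in the Ising basis forces the zero-energy bond condition, an induction from the first site with $m_x\neq-\J$ identifies the kink configurations, and the arithmetic $M=-2\J x+m$ with the $\Psi_0(x,\J;L)=\Psi_0(x-1,-\J;L)$ coincidence gives exactly one ground state per sector. Your use of $M=-2\J x+m$ is the right convention (the ``$M=2\J x+m$'' in the proposition statement is a sign typo relative to the rest of the paper), and your explicit existence check of $x\in[-L,L]$ is a minor point the paper leaves implicit.
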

This proposition was already used, implicitly, in setting up Theorem \ref{thm:Ising Excitations}.
Here we prove it.

\begin{proof}
Given any Ising configuration $\bm = (m_{\a})_{\a=-L}^{L}$, 
equation (\ref{eq:IsingEnergy}) demonstrates that the energy 
associated to $\ket{\bm}$,  $E^{\rm k}(\bm)$, is the sum 
of $2L$ non-negative terms. Therefore, the only way to 
have $E^{\rm k}(\bm)=0$ is if all of the summands are $0$.
This is the case only if either $m_{\a}=-J$ or $m_{\a+1}=+J$ for all $\a$.
Clearly, this is satisfied for the Ising configurations with
$m_{\a}=-J$ for all $\a<x$, $m_{\a}=+J$ for all $\a>x$,
and $m_x$ equal to any number in $[-J,J]$.
It is equally easy to see that these are all of the ground state configurations: if $\bm$ is a groundstate configuration
and for some $x$ we have $m_{x}\neq -J$,
then $m_{x+1}=J$, which, by induction, means that $m_{\a}=+J$ for all $\a>x$.
Similar reasoning yields that $m_{\a}=-J$ for all $\a<x$.

To show that the ground states are unique in each sector,
consider the equation $M-m=-2\Spin x$, subject to the constraints: 
$M \in \{-J(2L+1),\dots,J(2L+1)\}$, $m\in[-J,J]$, and $x\in[-L,L]$.
If $M-J \neq 0$ mod $2J$, then there is a unique pair $(x,m)$ which satisfies this equation.
If $M-J = 0$ mod $2J$, then there are two possible solutions $(x,J)$ and $(x-1,-J)$ for some $x$.
But then, it is trivial to see that $\Psi_0(x,J;L)$ and $\Psi_0(x-1,-J;L)$ coincide.
\end{proof}

Thus in any sector $M$, there is a unique ground state eigenvector 
$\Psi_0(x,m;L)$ for some choice of $(x,m)$ with $M = -2Jx +m$. 
We next observe that there are many eigenvectors with eigenvalue $2J$;
recall this was the statement of Theorem \ref{thm:EssentialSpec}.

\smallskip
\begin{proofof}{Theorem \ref{thm:EssentialSpec}}
First consider the case that $M-J$ is not divisible by $2J$.
Then the unique groundstate eigenvector is $\Psi_0(x,m;L)$ for some $x \in [-L,L]$
and $-J<m<J$. If $x>-L$, then for each $y \in [-L,x-2]$ 
consider the Ising configuration $\vec{m}$ with components
$$
m_{\a}\, 
=\, \begin{cases} -J & \text{ if $\a\le x-1$ and $\a \neq y$;}\\
-J+1 & \text{ if $\a=y$;}\\
m-1 & \text{ if $\a=x$;}\\
J & \text{ if $\a \ge  x+1$.}
\end{cases}
$$
The total magnetization for the vector $\ket{\vec{m}}$ is still $M$.
Using the formula (\ref{eq:IsingEnergy}) again, it is easy to 
check that $e^{\rm k}(m_{y},m_{y+1})=2J$,
$e^{\rm k}(m_{\a},m_{\a+1})=0$ for all $\a \neq y$, and therefore, $E^{\rm k}(\vec{m})=2J$.
This constitutes $x-1+L$ possible values of $y$; producing at least
this many distinct eigenvectors with energy $2J$.
Similarly, if $x<L$ then there are $L-x-1$ Ising configurations of the form $\vec{m}$ with components
$$
m_{\a}\, 
=\, \begin{cases} -J & \text{ if $\a\le x-1$;}\\
m+1 & \text{ if $\a=x$;}\\
J-1 & \text{ if $\a=y$;}\\
J & \text{ if $\a \ge  x+1$ and $\a \neq y$;}
\end{cases}
$$
corresponding to some $y \in [x+2,L]$.
In total, this gives $2L-2$ orthonormal eigenvectors corresponding to eigenvalue $2J$,
yielding a lower bound on  the dimension of the eigenspace.
If $x=-L$ or $L$, then the dimension is increased by at least 1.
In the special case where $M-J$ is divisible by $2J$, the eigenvector
can be written in two ways as $\Psi_0(x,J;L)$ or $\Psi_0(x-1,-J;L)$.
When constructing excitations for $y$ to the left of the kink, use
the first formula above relative to $\Psi_0(x,J;L)$.
When constructing excitations for $y$ to the right of the kink,
use the second formula above relative to $\Psi_0(x-1,-J;L)$.
Once again, this results in $2L-1$ orthonormal eigenvectors
corresponding to eigenvalue $2J$.
\end{proofof}

We now claim that any Ising configuration which is neither a ground
state nor a localized kink excitation corresponds to an energy that is
at least $2J$. This is the content of the following lemma.

\begin{lemma}\label{lemma:Ising Energy}
Consider an Ising configuration $\bm = (m_{\a})_{\a=-L}^{L}$.\\
(1) If there is any $x \in [-L,L-1]$ such that $m_x > m_{x+1}$, then $E^{\rm k}(\bm) \ge 2J$.\\
(2) If there is any $x \in [-L+1,\dots,L-1]$ such that $m_{x-1} > -J$ and $m_{x+1} < J$,
then $E^{\rm k}(\bm) \ge 2J$.
\end{lemma}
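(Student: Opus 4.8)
The plan is to analyze the energy formula $E^{\rm k}(\bm) = \sum_{\a=-L}^{L-1} (J+m_\a)(J-m_{\a+1})$ locally, exploiting that every summand $e^{\rm k}(m_\a,m_{\a+1}) = (J+m_\a)(J-m_{\a+1})$ is nonnegative, so it suffices to find one or two bonds whose energies already add up to at least $2J$. For part (1), suppose $m_x > m_{x+1}$ for some $x$. Then $m_x \geq m_{x+1} + 1$, and since $m_x \leq J$ we have $m_{x+1} \leq J-1$, so $J - m_{x+1} \geq 1$; since $m_{x+1} \geq -J$ we have $m_x \geq -J+1$, so $J + m_x \geq 1$. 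That only gives $e^{\rm k}(m_x,m_{x+1}) \geq 1$, which is not enough by itself, so I would instead argue more carefully: set $a = J + m_x \geq 1$ and $b = J - m_{x+1} \geq 1$. Then $a + b = 2J + (m_x - m_{x+1}) \geq 2J + 1$, and the bond energy is $ab$. Minimizing $ab$ subject to $a,b \geq 1$ and $a + b \geq 2J+1$ (with $a,b$ taking values in an appropriate half-integer-shifted lattice) is achieved at the extreme $a = 1$ (or $b=1$), giving $ab \geq 2J$. Hence this single bond already contributes at least $2J$, and all other bonds are nonnegative, so $E^{\rm k}(\bm) \geq 2J$.

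For part (2), suppose $m_{x-1} > -J$ and $m_{x+1} < J$ for some interior $x$. I would split into cases according to whether $\bm$ is ``monotone'' near $x$ or not. If $m_{x-1} > m_x$ or $m_x > m_{x+1}$, then part (1) applies directly and we are done. So we may assume $m_{x-1} \leq m_x \leq m_{x+1}$. Combined with $m_{x-1} \geq -J+1$ and $m_{x+1} \leq J-1$, this forces $m_x$ to lie strictly between $-J$ and $J$ as well, or at least we get useful lower bounds on two adjacent bond energies: $e^{\rm k}(m_{x-1},m_x) = (J+m_{x-1})(J-m_x)$ and $e^{\rm k}(m_x,m_{x+1}) = (J+m_x)(J-m_{x+1})$. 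Using $J + m_{x-1} \geq 1$ and $J - m_{x+1} \geq 1$, these two bond energies are at least $(J - m_x)$ and $(J + m_x)$ respectively, whose sum is exactly $2J$. Thus the two bonds adjacent to site $x$ already contribute at least $2J$ in total, and nonnegativity of the remaining bonds finishes the argument.

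The main obstacle I anticipate is the bookkeeping in part (1): the naive bound $ab \geq 1$ is too weak, and one must genuinely use that $a + b \geq 2J + 1$ together with $a,b \geq 1$ to conclude $ab \geq 2J$. The inequality ``$a,b \geq 1$ and $a+b \geq s$ imply $ab \geq s-1$'' is elementary — it follows from $ab - (a+b) + 1 = (a-1)(b-1) \geq 0$, so $ab \geq (a+b) - 1 \geq s - 1$ — but it needs to be stated cleanly, and one should double-check the edge case $J = 1/2$ where the spins take only values $\pm 1/2$ (there $m_x > m_{x+1}$ forces $m_x = 1/2, m_{x+1} = -1/2$, giving bond energy $(J+1/2)(J+1/2) = 1 = 2J$, consistent). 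A secondary subtlety in part (2) is making sure the case reduction is exhaustive: after reducing to $m_{x-1} \leq m_x \leq m_{x+1}$, one must confirm that the two factor bounds $J + m_{x-1} \geq 1$ and $J - m_{x+1} \geq 1$ are exactly the hypotheses given, and that no monotonicity of $m_x$ itself is needed. I would also remark that the lemma, together with Theorem~\ref{thm:EssentialSpec} and the explicit energies $E_\pm(m,n) = n^2 + (J \pm m)n$ from (\ref{eq:LocalizedKinkEnergies}), immediately yields part (1) of Theorem~\ref{thm:Ising Excitations}: any configuration that is not a ground state and not one of the $\Psi^\pm_n$ must violate the hypotheses of neither clause of the lemma, hence has energy $\geq 2J$.
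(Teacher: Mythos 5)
Your proof is correct, and at the strategic level it is the same as the paper's: both arguments use nonnegativity of the bond energies $e^{\rm k}$ to reduce part (1) to a single bond and part (2) to the two bonds adjacent to the site $x$. The algebraic execution differs, though, and in both parts yours is a bit more streamlined. For part (1) the paper writes $e^{\rm k}(m_x,m_{x+1}) = J^2 - m_x m_{x+1} + J(m_x - m_{x+1})$ and bounds each piece below by $J$ separately (the product term via the observation that once $m_x=m_{x+1}=\pm J$ is excluded one has $m_x m_{x+1} \leq J(J-1)$), whereas you set $a = J+m_x$, $b = J-m_{x+1}$ and use $a,b \geq 1$, $a+b \geq 2J+1$ and $(a-1)(b-1) \geq 0$ to get $ab \geq a+b-1 \geq 2J$ in one line. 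For part (2) the paper expands the sum of the two bond energies, uses $|m_x|\leq J$, and then does a case analysis on the sign of $m_{x-1}+m_{x+1}$; your estimate $(J+m_{x-1})(J-m_x) + (J+m_x)(J-m_{x+1}) \geq (J-m_x) + (J+m_x) = 2J$ needs only $J+m_{x-1}\geq 1$, $J-m_{x+1}\geq 1$ and $-J \leq m_x \leq J$, and so avoids any casework; in particular your preliminary reduction to the monotone case $m_{x-1}\leq m_x\leq m_{x+1}$ is superfluous, since this two-bond bound holds verbatim without it (as you yourself suspected). One small slip in your closing aside, which concerns the application to Theorem~\ref{thm:Ising Excitations}(1) rather than the lemma itself: a configuration that is neither a ground state nor a localized kink excitation satisfies the hypothesis of at least one clause of the lemma; your phrase ``must violate the hypotheses of neither clause'' says something different from what is meant.
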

\begin{proof}
(1) It is clear from (\ref{eq:IsingEnergy}) that we need only prove 
\begin{equation}
J^2 - m_x m_{x+1} + J(m_x-m_{x+1}) \, = \,  e^{\rm k}(m_x,m_{x+1}) \, \ge \, 2J.
\end{equation}
Since $m_x > m_{x+1}$, we have that $\Spin(m_{x} - m_{x+1}) \geq J$.
We need only verify that $J^2 - m_x m_{x+1} \ge J$
to establish the claim.
The product of two integers $m_x, m_{x+1} \in [-J,J]$ 
is at most equal to $J^2$.
But this is only attained by $m_x=m_{x+1}=\pm J$.
Since $m_x>m_{x+1}$, we can neither have $m_{x+1}=J$ nor $m_{x}=-J$.
The next largest possible value of $m_x m_{x+1}$ is $J(J-1)$, and we
have verified the claim. 

\smallskip
\noindent
(2) Again, because all terms are nonnegative, it suffices to show that 
$$
e^{\rm k}(m_{x-1},m_x) + e^{\rm k}(m_x,m_{x+1})\, \ge\, 2J\, .
$$
Using the formula for $e^{\rm k}(\cdot,\cdot)$ and simplifying gives
$$
e^{\rm k}(m_{x-1},m_x) + e^{\rm k}(m_x,m_{x+1})\, 
=\, 2J^2 + J(m_{x-1}-m_{x+1}) -m_x (m_{x-1} + m_{x+1})\, .
$$
Since $|m_x|\leq J$, we have
\begin{align*}
e^{\rm k}(m_{x-1},m_x) + e^{\rm k}(m_x,m_{x+1})\, 
&\geq\, 2J^2 + J(m_{x-1}-m_{x+1}) - J  |m_{x-1} + m_{x+1}|\\
&=\, J(2J + m_{x-1}-m_{x+1} - |m_{x-1} + m_{x+1}|)\, .
\end{align*}
Therefore,
$$
e^{\rm k}(m_{x-1},m_x) + e^{\rm k}(m_x,m_{x+1})\, 
\geq\, 
\begin{cases}
2J(J-m_{x+1}) & \text{ if $m_{x-1}+m_{x+1}\geq 0$;}\\
2J(J+m_{x-1}) & \text{ if $m_{x-1}+m_{x+1}\leq 0$.}
\end{cases}
$$
Since $J-m_{x+1} \geq 1$
and $J+m_{x-1} \geq 1$, in either case we have proven the claim.
\end{proof}

Now we can finish the proof of the first main result, Theorem \ref{thm:Ising Excitations}.

\begin{proofof}{Theorem \ref{thm:Ising Excitations}}
Part (1) of the theorem is a direct consequence of Lemma \ref{lemma:Ising Energy}
because the only Ising configurations that do not satisfy condition (1) or (2) of that lemma
are the ground state configurations and the localized kink excitations.

To prove part (2), let $M = -2Jx +m$ for some $x \in [-L+1,L-1]$ and
$m \in [-J,J]$. We first consider the case that $J \geq 3/2$. 

If $J$ is sufficiently large and $|m| \leq J-2$, then $n=1 \in
K_{\pm}(m)$ as  
$$
E_{\pm}(m,1)\, =\, 1^2 + (J \pm m) 1\, \leq\, 1 + J + |m|\, <\, 2J\, .
$$
So, in this case, both $K_+(m)$ and $K_{-}(m)$ are nonempty.

Similarly, if $|m| = J-1$, then either
$$
E_+(-J+1,1) \, = \, 1^2 +(J +(-J+1))1 \, = \, 2,
$$
or
$$
E_-(J-1,1) \, = \, 1^2 +(J-(J-1))1 \, = \, 2.
$$
Hence, $1 \in K_+(m) \cup K_-(m)$ if $|m| = J-1$ and $J \geq 3/2$.

Lastly, if $|m| = J$, then either
$$
E_+(-J,1) \, = \, 1^2 +(J +(-J))1 \, = \, 1,
$$
or
$$
E_-(J,1) \, = \, 1^2 +(J-J)1 \, = \, 1.
$$
Hence, $1 \in K_+(m) \cup K_-(m)$ if $|m| = J-1$ and $J \geq 1$.

We have proved (2) in the case that $J \geq 3/2$. Actually, the last
observation above also verifies (2) in the case that $J=1$ and $m =
\pm 1$. 

Finally, if $J=1$ and $m=0$, then $E_{\pm}(0,1) = 1^2 +(1 \pm 0) =2$
and if $J = 1/2$, then $E_{\pm}( \mp 1/2, 1) = 1^2 +(1/2 -1/2) =
1$. In both these cases, the set of kink excitations is empty.
 
We now prove the first part of (3). First, observe that for any $m \in [-J,J]$, $J \pm m
\geq 0$ and therefore with $m$ fixed, both $E_{\pm}(m,n)$ are increasing
functions of $n$ for $n \geq 0$.
Therefore the only degeneracies that can occur for a particular energy $E$
is if $E_+(m,n_1)=E$ and $E_-(m,n_2)=E$ for some integers $n_1$ and $n_2$.
This is obviously at most a two-fold degeneracy.
 
%Now, fix a sector 
%$M \in \{ -J(2L+1), \dots, J(2L+1) \}$. Write 
%$M = -2Jx +m$ with some $x \in [-L,L]$ and $m \in [-J,
%J]$. For this value of $m$, we have that $E_{\pm}(m,2) = 4 +(J \pm
%m)2 = 2J + 4 \pm 2m$. Clearly, if $m \geq 0$, then $E_{+}(m,2) \geq
%2J +4$, and therefore, $K_+(m) \subset \{ 1 \}$. If $K_+(m) =
%\emptyset$, then all eigenvalues corresponding to energies less than
%$2J$ are simple. If $1 \in K_+(m)$, then the only degeneracies occur
%when $E_-(m,n) = E_+(m,1)$ for some $1 \leq n \leq J+m$. This is
%equivalent to finding an integer-valued solution of
%\begin{equation} \label{eq:quad}
%n^2 \, + \, (J-m) n \, - \, (J+m+1) \, = \, 0
%\end{equation}
%for the values of $n$ specified above. There is at most one such
%solution. Similar reasoning applies in the case that $m < 0$.

In order to prove the second part of (3), we will simply prove Remark \ref{rem:simpler}.
Without loss of generality, suppose $m\geq 0$.
Then $E_+(m,2)=4+2J+2m>2J$.
So the only possibility for $E_+(m,n)$ to be less than $2J$ is if $n=1$,
which gives the energy $J+m+1$ and an auxiliary condition, namely $J-m\geq 2$.
A degeneracy happens only if there is a $p\geq 1$ such that $E_-(m,p)=E_+(m,1)$.
This means
$$
p^2 + (J-m) p = J+m+1\qquad \Longleftrightarrow\qquad
(p+1)(p+J-m-1)\, =\, 2J\, .
$$
Setting $a=p+1$ and $b=p+J-m-1$ (which satisfies $b\geq a$ because $J-m\geq 2$)
we have exactly the result claimed.
Note that the second part of (3) refers to the first excitation.
For $m\geq 0$, the lowest excitation is $E_-(m,1)$.
This means $p=1$, which implies $m=0$ and therefore $2J$ is even
with $J\geq 2$ (because $J-m\geq 2$).

%Next, we consider possible degeneracies of the first excited state
%in the sector $M = -2Jx +m$. For this value of $m$, the parabolas
%$E_{\pm}(m,n)$, which intersect at $n=0$, are either identical or
%intersect in at most one additional value of $n$. They are identical
%when $m=0$. In this case $J$ is integer-valued, and it is easy to see
%that the first excited state is doubly degenerate if $J>1$. (Recall
%that we just proved there are no kink excitations in the event%
%that $J=1$ and $m=0$.) If $m \ne%q
%0$, then the parabolas are distinct. The smallest eigenvalue
%corresponds to minimizing the values when $n$ is set to $1$. In this
%case, we find that
%\begin{equation}
%E_+(m,1) \, - \, E_-(m,1) \, = \, 2 m \, \neq 0,
%\end{equation}
%and therefore, the first excited state is simple.

\end{proofof}

%%%%%%%%%%%%%%%%%%%%%%%%%%%%%%%%%%%%%
%
%  SECTION: Main Theorem
%
%%%%%%%%%%%%%%%%%%%%%%%%%%%%%%%%%%%%%%%%

\Section{Proof of the main theorem}
\label{sec:mainthm}

The goal of this section is to prove Theorem~\ref{thm:Main theorem}. We will do so by
analyzing the kink Hamiltonian $H_L^{\rm k}( \Delta^{-1})$ as a
perturbation of the Ising limit $H_L^{\rm k}(0)$. Within the first
subsection below, specifically in Theorem~\ref{thm:relbd}, we prove that the
operators which arise in our expansion of $H_L^{\rm k}( \Delta^{-1})$
are relatively bounded with respect to $H_L^{\rm k}(0)$. In the
next subsection, we discuss how the explicit bounds on $\Delta$, those 
claimed in Theorem~\ref{thm:Main theorem}, follow from relative boundedness and basic
perturbation theory.

\subsection{Relative Boundedness}
In this subsection, we will analyze the kink Hamiltonian introduced in
(\ref{XXZ Hamiltonian}). Recall that this Hamiltonian is written as
\begin{gather}
H_L^{\rm k}(\Delta^{-1})\, 
=\, \sum_{\alpha=-L}^{L-1} h^{\rm k}_{\alpha,\alpha+1}(\Delta^{-1})\, ,\\
\notag
h^{\rm k}_{\alpha,\alpha+1}(\Delta^{-1})\, 
=\,  \Spin^2-S_\alpha^3 S_{\alpha+1}^3 - \Delta^{-1}(S_\alpha^1 S_{\alpha+1}^1 +  S_\alpha^2 S_{\alpha+1}^2) +
    \Spin \sqrt{1-\Delta^{-2}}\, (S_{\a}^3 - S_{\a+1}^3).
\end{gather}
By adding and subtracting terms of the form $J(S_{\a}^3 -S_{\a +1}^3)$
to the local Hamiltonians, we find that
\begin{equation}
\label{eq:HkinkExpansion}
H_L^{\rm k}( \Delta^{-1}) = H_L^{\rm k}(0) + \Delta^{-1}H_L^{(1)}  +
 \left( 1 - \sqrt{1-\Delta^{-2}} \, \right)H_L^{(2)} 
\end{equation}
where
\begin{equation}
\label{eq:HKinkSummation}
H_L^{(1)}\, =\, - \, \sum_{\a=-L}^{L-1} h_{\a,\a+1}^{(1)} \quad \mbox{
  with } \quad h_{\a,\a+1}^{(1)}\, = \, \frac{1}{2}(S_\alpha^+S_{\alpha+1}^- + S_\alpha^-S_{\alpha+1}^+) 
\end{equation}
and
\begin{equation}
H_L^{(2)} = \Spin(S_{L}^3 - S_{-L}^3).
\end{equation}
In Theorem~\ref{thm:relbd} below, we will show that both $H_L^{(1)}$
and $H_L^{(2)}$ are relatively bounded perturbations of $H_L^{\rm
  k}(0)$. To prove such estimates, we will use the following lemma several
times. 
\begin{lemma}
\label{lem:SA operators}
Let $A$ and $B$ be self-adjoint $n \times n$ matrices.  
If $A \ge 0$ and 
${\rm Ker}(A) \subset {\rm Ker}(B)$, then 
there exists a constant $c>0$ for which, 
\begin{equation}
-c A \, \leq  \, B \, \leq \, c A. 
\end{equation}
One may take $c = \frac{\|B\|}{\lambda_1}$ where $\lambda_1$ denotes the 
smallest positive eigenvalue of $A$.
\end{lemma}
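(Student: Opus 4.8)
The plan is to diagonalize $A$ and use the operator inequality on the orthogonal complement of its kernel. Since $A$ is self-adjoint and $A \ge 0$, write $\Hil = \Cx^n = {\rm Ker}(A) \oplus {\rm Ker}(A)^\perp$, and let $P$ denote the orthogonal projection onto ${\rm Ker}(A)^\perp$. On the range of $P$, the operator $A$ is bounded below by its smallest positive eigenvalue $\lambda_1 > 0$, so $A \ge \lambda_1 P$ as operators on all of $\Hil$ (both sides annihilate ${\rm Ker}(A)$). The hypothesis ${\rm Ker}(A) \subset {\rm Ker}(B)$ says precisely that $B = B P = P B = P B P$, i.e. $B$ is supported on ${\rm Ker}(A)^\perp$. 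Then for any vector $\psi$ we have $|\langle \psi, B \psi\rangle| = |\langle P\psi, B P\psi\rangle| \le \|B\| \, \|P\psi\|^2 = \|B\| \langle \psi, P \psi\rangle$, hence $-\|B\| P \le B \le \|B\| P$.

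Combining the two inequalities, with $c = \|B\|/\lambda_1$, gives $-cA \le -\|B\| P \le B \le \|B\| P \le cA$, which is the claim.

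There is essentially no obstacle here; the only point requiring a line of care is the equivalence between the kernel inclusion ${\rm Ker}(A) \subset {\rm Ker}(B)$ and the identity $B = PBP$, which follows because $B$ is self-adjoint: if $B$ kills ${\rm Ker}(A)$ then ${\rm ran}(B) = {\rm ran}(B^*) = ({\rm Ker}(B))^\perp \subset ({\rm Ker}(A))^\perp = {\rm ran}(P)$, so $PB = B$, and taking adjoints $BP = B$ as well. The rest is the elementary spectral bound $A \ge \lambda_1 P$ for a positive semidefinite matrix, applied to the single inequality $|\langle\psi, B\psi\rangle| \le \|B\|\,\|P\psi\|^2$.
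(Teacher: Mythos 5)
Your proof is correct and is essentially the paper's argument: the paper decomposes $\psi = \psi_0 + \psi_1$ with $\psi_1 \in {\rm Ker}(A)^{\perp}$ and bounds $|\langle \psi, B \psi \rangle| = |\langle \psi_1, B \psi_1 \rangle| \leq \|B\| \, \|\psi_1\|^2 \leq \frac{\|B\|}{\lambda_1} \langle \psi, A \psi \rangle$, which is exactly your estimate written without the projection $P$. Your only addition is to make explicit, via $B = PBP$ and self-adjointness of $B$, why the cross terms vanish --- a step the paper passes over with ``clearly.''
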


\begin{proof}
Any vector $\psi \in \mathbb{C}^n$ can be written as $\psi = \psi_0 +
\psi_1$ where $\psi_0 \in {\rm Ker}(A)$ and $\psi_1 \in {\rm
  Ker}(A)^{\perp}$. Clearly then, $ \lambda_1 \| \psi_1 \|^2 \leq
\langle \psi, A \psi \rangle$ and therefore
\begin{equation}
\left| \langle \psi, B \psi \rangle \right| \, = \, \left| \langle
  \psi_1 , B \psi_1 \rangle \right| \, \leq \, \| B \| \, \| \psi_1
\|^2 \, \leq \, \frac{\| B \|}{ \lambda_1} \, \langle \psi, A \psi \rangle,
\end{equation}
as claimed.  
\end{proof}

For our proof of Theorem~\ref{thm:relbd}, we find it useful to introduce the 
Ising model without boundary conditions as
an auxillary Hamiltonian, i.e., 
$$
H_L(0)\, =\, \sum_{\alpha=-L}^{L-1} h_{\a,\a+1}(0)\quad \text{where}\quad 
h_{\a,\a+1}(0)\, =\, J^2-S_{\a}^{3} S_{\a+1}^{3}\, .
$$
It is easy to prove the next lemma.
\begin{lemma}
\label{lemma:rb1}
The Ising model without boundary terms is relatively bounded
with respect to the Ising kink Hamiltonian.
In particular, for any vector $\psi$,
$$
\|H_L(0) \psi\|\, \leq\, \|H_L^{\rm k}(0) \psi\| + 2 J^2 \|\psi\|\, .
$$
\end{lemma}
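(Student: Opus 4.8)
The plan is to use the fact that both $H_L(0)$ and $H_L^{\rm k}(0)$ are diagonal in the Ising basis $\{\ket{\bm}\}$, so that the relative bound reduces to a pointwise comparison of the two families of eigenvalues followed by a triangle inequality in a weighted $\ell^2$ norm. No functional-analytic input beyond this is needed; the only thing to set up carefully is the bookkeeping between the two diagonalizations.

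First I would record the eigenvalues. By (\ref{eq:IsingEnergy}), $H_L^{\rm k}(0)\ket{\bm}=E^{\rm k}(\bm)\ket{\bm}$ with $E^{\rm k}(\bm)=\sum_{\a=-L}^{L-1}(J+m_\a)(J-m_{\a+1})$, while $H_L(0)\ket{\bm}=E(\bm)\ket{\bm}$ with $E(\bm)=\sum_{\a=-L}^{L-1}(J^2-m_\a m_{\a+1})$. Expanding $(J+m_\a)(J-m_{\a+1})=(J^2-m_\a m_{\a+1})+J(m_\a-m_{\a+1})$ and summing the telescoping term gives
$$E^{\rm k}(\bm)=E(\bm)+J(m_{-L}-m_L),\qquad\text{equivalently}\qquad E(\bm)=E^{\rm k}(\bm)+J(m_L-m_{-L}).$$
Since $m_{\pm L}\in[-J,J]$ the correction obeys $|J(m_L-m_{-L})|\le 2J^2$, and because each summand in $E^{\rm k}(\bm)$ is nonnegative we have $E^{\rm k}(\bm)\ge 0$, so that $|E(\bm)|\le E^{\rm k}(\bm)+2J^2$ holds for every Ising configuration $\bm$.

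Next I would pass from this pointwise bound to the operator estimate. Writing an arbitrary $\psi=\sum_{\bm}c_{\bm}\ket{\bm}$ in the Ising basis, we get $\|H_L(0)\psi\|^2=\sum_{\bm}|c_{\bm}|^2 E(\bm)^2\le\sum_{\bm}|c_{\bm}|^2\bigl(E^{\rm k}(\bm)+2J^2\bigr)^2$. Applying the triangle inequality for the $\ell^2$ norm with weights $|c_{\bm}|^2$ to the sequences $\bm\mapsto E^{\rm k}(\bm)$ and $\bm\mapsto 2J^2$ yields
$$\|H_L(0)\psi\|\le\Bigl(\sum_{\bm}|c_{\bm}|^2 E^{\rm k}(\bm)^2\Bigr)^{1/2}+2J^2\Bigl(\sum_{\bm}|c_{\bm}|^2\Bigr)^{1/2}=\|H_L^{\rm k}(0)\psi\|+2J^2\|\psi\|,$$
which is exactly the claimed inequality.

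There is no serious obstacle here — the paper's own remark that the lemma is easy is accurate. The only place that warrants a moment's attention is the final step: turning the entrywise eigenvalue bound $|E(\bm)|\le E^{\rm k}(\bm)+2J^2$ into the operator inequality is not literally an operator monotonicity statement but rather the weighted-$\ell^2$ triangle inequality, and one should note explicitly that $E^{\rm k}(\bm)\ge 0$ so that no absolute value is lost when splitting the sum.
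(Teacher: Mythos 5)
Your proof is correct and rests on the same key fact as the paper's: the telescoping identity relating the two Hamiltonians, $H_L(0)=H_L^{\rm k}(0)+J(S_L^3-S_{-L}^3)$, which you derive at the eigenvalue level as $E(\bm)=E^{\rm k}(\bm)+J(m_L-m_{-L})$, followed by a triangle inequality with the bound $2J^2$ on the boundary term. The paper applies the ordinary triangle inequality directly to this operator identity using $\|J(S_L^3-S_{-L}^3)\|\le 2J^2$, so your passage to the Ising basis and the weighted-$\ell^2$ Minkowski step is a harmless but unnecessary detour.
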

\begin{proof}
Consider the terms of the Ising kink Hamiltonian:
\begin{equation}
h_{\alpha, \alpha +1}^{\rm k}(0) \, = \,
(J+S_{\alpha}^3)(J-S_{\alpha+1}^3) \, = \, h_{\alpha, \alpha+1}(0) +
J(S_{\alpha}^3 - S_{\alpha +1}^3).
\end{equation}
Summing on $\alpha$ then, we find that
\begin{equation}
H_L(0) \, = \, H_L^{\rm k}(0) \, + \, J(S_L^3 - S_{-L}^3), 
\end{equation}
and therefore, the bound
\begin{equation}
\| H_L(0) \psi \| \, \leq \, \|H_L^{\rm k}(0) \psi \| \, + \, 2 J^2 \,
\| \psi \|,
\end{equation}
is clear for any vector $\psi$.
\end{proof}

We now state the relative boundedness result.
\begin{theorem}
\label{thm:relbd}
The linear term in the perturbation expansion of $H_L^{\rm
  k}(\Delta^{-1})$, see (\ref{eq:HkinkExpansion}), satisfies
\begin{equation} \label{eq:relbdp=1}
\left\| H_L^{(1)} \psi \right\| \, \leq \, \sqrt{J^2 \, + \, 2J^3} \,
\|H_L^{\rm k}(0) \psi \| \, + \, 2 J^2 \,  \sqrt{J^2 \, + \, 2J^3}  \,
\| \psi \|,
\end{equation}
for any vector $\psi$. Moreover,  we also have that
\begin{equation} \label{eq:relbdp>1}
\left\| H_L^{(2)} \psi \right\| \, \leq \, 2J^2\left\| \psi \right\|
\end{equation}
\end{theorem}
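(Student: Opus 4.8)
The plan is to prove both estimates by comparison with the boundary-free auxiliary Ising Hamiltonian $H_L(0)=\sum_\alpha(J^2-S^3_\alpha S^3_{\alpha+1})$, for which Lemma~\ref{lemma:rb1} already supplies $\|H_L(0)\psi\|\le\|H_L^{\rm k}(0)\psi\|+2J^2\|\psi\|$. The bound (\ref{eq:relbdp>1}) for $H_L^{(2)}=J(S^3_L-S^3_{-L})$ needs nothing further: each $S^3_\alpha$ has spectrum in $[-J,J]$, so $\|S^3_L-S^3_{-L}\|\le 2J$ and hence $\|H_L^{(2)}\psi\|\le 2J^2\|\psi\|$. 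Thus the real content is (\ref{eq:relbdp=1}), and for it it suffices to prove the $L$-independent relative bound $\|H_L^{(1)}\psi\|\le\sqrt{J^2+2J^3}\,\|H_L(0)\psi\|$, equivalently the operator inequality $(H_L^{(1)})^2\le(J^2+2J^3)\,(H_L(0))^2$; composing this with Lemma~\ref{lemma:rb1} reproduces exactly the additive term $2J^2\sqrt{J^2+2J^3}\,\|\psi\|$ of the theorem.

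To obtain the operator inequality I would exploit the locality of both operators together with the special structure of the Ising part. Write $H_L^{(1)}=-\sum_\alpha b_\alpha$ with $b_\alpha=\tfrac12(S^+_\alpha S^-_{\alpha+1}+S^-_\alpha S^+_{\alpha+1})$ supported on $\{\alpha,\alpha+1\}$, and $H_L(0)=\sum_\alpha h_\alpha$ with $h_\alpha=J^2-S^3_\alpha S^3_{\alpha+1}\ge 0$. The $h_\alpha$ are pairwise commuting, nonnegative and diagonal in the Ising basis, so $(H_L(0))^2$ dominates $\sum_\alpha h_\alpha^2$ and, more generally, any sum of the nonnegative cross products $h_\alpha h_\beta$; and each $b_\alpha$ commutes with all the $b_\beta$ save $b_{\alpha\pm1}$. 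Furthermore $\mathrm{Ker}(h_\alpha)$ is the span of the two fully polarized bond states $\ket{J,J}$ and $\ket{-J,-J}$, which $b_\alpha$ annihilates (one can neither raise past $J$ nor lower past $-J$), so $\mathrm{Ker}(h_\alpha)\subseteq\mathrm{Ker}(b_\alpha)$; and the smallest positive eigenvalue of $h_\alpha$ is $J$. Lemma~\ref{lem:SA operators}, applied one (or two) bond(s) at a time, then furnishes two-site comparisons dominating $b_\alpha^2$ by a multiple of $h_\alpha^2$, and --- via a tensor-product argument --- the commuting products $b_\alpha b_\beta$ with $|\alpha-\beta|\ge2$ by the corresponding $h_\alpha h_\beta$. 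To pass from these local bounds to a global one with no accumulation of $L$, I would decouple the sum $\sum_\alpha b_\alpha$, e.g.\ into its even- and odd-index parts (each a sum of bond operators with pairwise disjoint supports), and reassemble using the commutativity and positivity of the $h_\alpha$, for instance through $(\sum_{\alpha\text{ even}}h_\alpha)^2+(\sum_{\alpha\text{ odd}}h_\alpha)^2\le(H_L(0))^2$. Carrying the sharp local constants $\|b_\alpha\|$ and the eigenvalue $J$ through this assembly is what determines the admissible value of the constant.

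The main obstacle is precisely this $L$-uniform control of the off-diagonal cross-terms $b_\alpha b_\beta$ in $(H_L^{(1)})^2$: they are self-adjoint but not sign-definite, and the crude bound $(\sum_\alpha b_\alpha)^2\le 2L\sum_\alpha b_\alpha^2$ is worthless. The decoupling tames the dangerous nearest-neighbor overlaps at the cost of a harmless numerical factor, while the distant cross-terms are absorbed into the abundant nonnegative pieces $h_\alpha h_\beta$ already sitting inside $(H_L(0))^2$. Getting the constant down to a clean quantity of the order $J^2+2J^3$, rather than to a rougher multiple, forces one to use the sharp two-site comparison --- the generic constant $\|B\|/\lambda_1(A)$ of Lemma~\ref{lem:SA operators} is somewhat wasteful --- and to keep, rather than throw away, the positive adjacent cross-terms on the right-hand side.
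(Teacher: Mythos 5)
Your proposal is correct, and it follows the paper's skeleton up to the decisive step: like the paper, you reduce via Lemma~\ref{lemma:rb1} to the boundary-free estimate $\|H_L^{(1)}\psi\|\le\sqrt{J^2+2J^3}\,\|H_L(0)\psi\|$, obtain the single-bond comparison $-J\,h_{\alpha,\alpha+1}(0)\le h^{(1)}_{\alpha,\alpha+1}\le J\,h_{\alpha,\alpha+1}(0)$ from Lemma~\ref{lem:SA operators} (same kernel observation, $\lambda_1=J$), and absorb the diagonal and the distant ($|\alpha-\alpha'|\ge 2$) products into $(H_L(0))^2$ exactly as in (\ref{eq:relbddia}) and (\ref{eq:gabd}), via positivity of $g_\alpha^{\pm}=J\,h_{\alpha,\alpha+1}(0)\pm h^{(1)}_{\alpha,\alpha+1}$. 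Where you genuinely diverge is the nearest-neighbour cross terms. The paper keeps them: after discarding the nonnegative adjacent products $h_{\alpha,\alpha+1}(0)h_{\alpha+1,\alpha+2}(0)$, it bounds the anticommutators $B_\alpha=h^{(1)}_{\alpha,\alpha+1}h^{(1)}_{\alpha+1,\alpha+2}+h^{(1)}_{\alpha+1,\alpha+2}h^{(1)}_{\alpha,\alpha+1}$ by $2J^3\bigl(h_{\alpha,\alpha+1}(0)+h_{\alpha+1,\alpha+2}(0)\bigr)$ through a second, three-site application of Lemma~\ref{lem:SA operators} (kernel of $A_\alpha$ being the fully polarized three-site configurations, $\lambda_1=J$, $\|B_\alpha\|\le 2J^4$); this is the sole source of the $2J^3$ in the constant. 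You instead eliminate these terms by the parity split: writing $H_L^{(1)}=-(B_e+B_o)$ with $B_e,B_o$ the sums over even and odd bonds, using $\|(B_e+B_o)\psi\|^2\le 2\|B_e\psi\|^2+2\|B_o\psi\|^2$, bounding each parity class (which contains only commuting, disjointly supported bonds) by $J^2\bigl(\sum_{\mathrm{even}}h_{\alpha,\alpha+1}(0)\bigr)^2$, resp.\ the odd analogue, and reassembling with $\bigl(\sum_{\mathrm{even}}h\bigr)^2+\bigl(\sum_{\mathrm{odd}}h\bigr)^2\le (H_L(0))^2$, which holds since the $h_{\alpha,\alpha+1}(0)$ are commuting and nonnegative. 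This works and yields the constant $2J^2\le J^2+2J^3$ for all $J\ge 1/2$, so (\ref{eq:relbdp=1}) follows a fortiori; your route is marginally simpler (no three-site kernel computation) and in fact gives the slightly better factor $\sqrt{2}\,J$. One remark: your closing concern that the generic constant $\|B\|/\lambda_1$ of Lemma~\ref{lem:SA operators} is too wasteful to reach $\sqrt{J^2+2J^3}$ is unfounded --- the paper uses precisely those generic constants, and in your scheme $\sqrt{2}\,J\le\sqrt{J^2+2J^3}$ already suffices, so no sharpened two-site comparison is needed.
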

\begin{proof}
Using Lemma~\ref{lemma:rb1}, it is clear we need only prove that
\begin{equation} \label{lem:hambd}
\left\| H_L^{(1)} \psi \right\| \, \leq \, \sqrt{J^2 \, + \, 2J^3} \left\| H_L(0) \psi \right\|.
\end{equation}
to establish (\ref{eq:relbdp=1}).
To this end, Lemma~\ref{lem:SA operators} provides an immediate bound on the individual
terms of these Hamiltonians. In fact, observe that for any fixed $\alpha$, both $h_{\alpha, \alpha +1}(0)$
and $h_{\alpha, \alpha+1}^{(1)}$ are self-adjoint with $h_{\alpha,
  \alpha+1}(0) \geq 0$ and 
\begin{equation}
{\rm Ker}\left( h_{\alpha, \alpha +1}(0)
\right) \, = \, \left\{ | \vec{m} \rangle \, : \, m_{\alpha} =
  m_{\alpha+1} = \pm J \, \right\} 
\subset {\rm Ker}\left( h_{\alpha, \alpha +1}^{(1)} \right).
\end{equation}
It is also easy to see that, for every $\alpha$, the first positive 
eigenvalue of $h_{\alpha, \alpha   +1}(0)$ is $\lambda_1 = J$, and we have that
\begin{equation}
\| h_{\alpha, \alpha +1}^{(1)} \| \, = \, \frac{1}{2} \left\|
  S_{\alpha}^+ S_{\alpha +1}^- + S_{\alpha}^- S_{\alpha+1}^+ \right\|
\, \leq \, J^2.
\end{equation}
An application of Lemma~\ref{lem:SA operators} yields the operator inequality
\begin{equation} \label{eq:relbd}
-J \, h_{\alpha, \alpha +1}(0) \, \leq \, h_{\alpha, \alpha+1}^{(1)}
\, \leq \, J \, h_{\alpha, \alpha+1}(0),
\end{equation}
valid for any $\alpha$.

The norm bound we seek to prove will follow from considering products
of these local Hamiltonians. For any vector $\psi$, one has that
\begin{equation} \label{eq:normhL1}
\| H_L^{(1)} \psi \|^2 \, = \, \langle \psi, (H_L^{(1)})^2 \psi
\rangle \, = \, \sum_{\alpha, \alpha' = - L}^{L-1} \langle \psi,
h_{\alpha, \alpha+1}^{(1)} \, h_{\alpha', \alpha'+1}^{(1)} \psi \rangle
\end{equation}
and
\begin{equation} \label{eq:normhL0}
\| H_L(0) \psi \|^2 \, = \, \langle \psi, (H_L(0))^2 \psi
\rangle \, = \, \sum_{\alpha, \alpha' = - L}^{L-1} \langle \psi,
h_{\alpha, \alpha+1}(0) \, h_{\alpha', \alpha'+1}(0) \psi \rangle.
\end{equation}

The arguments we provided above apply equally well to the diagonal
terms of (\ref{eq:normhL1}) and (\ref{eq:normhL0}) in the sense that
\begin{equation} \label{eq:relbddia}
-J^2 \, \left(h_{\alpha, \alpha +1}(0) \right)^2 \, \leq \, \left(
  h_{\alpha, \alpha+1}^{(1)} \right)^2
\, \leq \, J^2 \, \left( h_{\alpha, \alpha}(0) \right)^2,
\end{equation}
is also valid for any $\alpha$. We find a similar bound by considering the terms 
on the right hand side of (\ref{eq:normhL1}) and
(\ref{eq:normhL0}) for which $| \alpha - \alpha'| >1$. In this case,
each of the operators $h_{\alpha, \alpha+1}(0)$ and $h_{\alpha,
  \alpha+1}^{(1)}$ commute with both of the operators 
$h_{\alpha', \alpha'+1}(0)$ and $h_{\alpha', \alpha'+1}^{(1)}$.
Moreover, we conclude from (\ref{eq:relbd}) that  the operators 
$g_{\alpha}^{\pm} =  J \, h_{\alpha, \alpha+1}(0)
\pm h_{\alpha, \alpha +1}^{(1)}$ are non-negative for every $\alpha$.
Since all the relevant quantities commute, it is clear that
\begin{equation} \label{eq:gabd}
0 \leq \frac{1}{2} \left( g_{\alpha}^+ g_{\alpha'}^- \, + \,
  g_{\alpha}^- g_{\alpha'}^+ \right) \, = \, J^2 \, h_{\alpha,
  \alpha+1}(0)  h_{\alpha', \alpha'+1}(0) \, - \,  h_{\alpha,
  \alpha+1}^{(1)}  h_{\alpha', \alpha'+1}^{(1)}.
\end{equation}

Our observations above imply the following bound 
\begin{equation} \label{eq:normest}
\left\| H_L^{(1)} \psi \right\|^2 \, - \, J^2 \, \left\| H_L(0) \psi
\right\|^2 \, \leq \, \sum_{\alpha = -L}^{L-2} \left\langle \psi, \left(
 h_{\alpha, \alpha+1}^{(1)} h_{\alpha+1, \alpha+2}^{(1)} \, + \,
 h_{\alpha+1, \alpha+2}^{(1)}  h_{\alpha, \alpha+1}^{(1)}   \right)
\psi \right\rangle.
\end{equation}
In fact, the terms on the right hand side of (\ref{eq:normest}) for
which either $\alpha' = \alpha$ or $| \alpha - \alpha'|>1$ are 
non-positive by (\ref{eq:relbddia}), respectively, (\ref{eq:gabd}).
In the case that $| \alpha - \alpha' | =1$, the operators
$h_{\alpha, \alpha+1}(0) h_{\alpha', \alpha' +1}(0)$ are non-negative
(since they commute) and hence we may drop these terms; those terms that
remain we group as the self-adjoint operators appearing on the right
hand side of (\ref{eq:normest}) above.

Our estimate is completed by applying Lemma~\ref{lem:SA operators} one
more time. Note that for any $\alpha \in \{-L, \cdots, L-2 \}$ the
operator $A_{\alpha} = h_{\alpha, \alpha+1}(0) + h_{\alpha+1, \alpha +2}(0)$ is
self-adjoint, non-negative, and 
\begin{equation}
{\rm Ker}(A_{\alpha}) \, = \, \left\{ | \vec{m} \rangle \, : \, m_{\alpha} \, =
  \,  m_{\alpha+1} \, = \, m_{\alpha+2} \, = \, \pm J \right\} \,
\subset \, {\rm Ker}(B_{\alpha}),
\end{equation}
where the self-adjoint operator $B_{\alpha}$ appearing above is given by
\begin{equation}
B_{\alpha} =  h_{\alpha, \alpha+1}^{(1)} h_{\alpha+1, \alpha+2}^{(1)} \, + \,
 h_{\alpha+1, \alpha+2}^{(1)}  h_{\alpha, \alpha+1}^{(1)}.
\end{equation}
For each $\alpha$, the first positive eigenvalue of $A_{\alpha}$ is
$\lambda_1( \alpha) = J$, and it is also easy to see that $\|
B_{\alpha} \| \leq 2 J^4$. Thus, term by term Lemma~\ref{lem:SA
  operators} implies that 
\begin{equation}
\langle \psi, B_{\alpha} \psi \rangle \, \leq \, 2 J^3 \, \langle
\psi, A_{\alpha} \psi \rangle,
\end{equation}
from which we conclude that 
\begin{equation}
\left\| H_L^{(1)} \psi \right\|^2 \, \leq \, \left( J^2 \, + \, 2 J^3
\right) \, \left\| H_L(0) \psi \right\|^2, 
\end{equation}
as claimed in (\ref{lem:hambd}). We have proved (\ref{eq:relbdp=1}).

Equation  (\ref{eq:relbdp>1}) follows directly from the easy observation that $\left\| H_L^{(2)}\right\|$ is equal to $2J^2$.
\end{proof}

%%%%%%%%%%%%%
%
%  Pert. Theory
%
%
%%%%%%%%%%%%%%%%%%%%%%%%%%

\subsection{Perturbation theory}\label{sec:perturbation_theory}

In Section~\ref{sec:Ising}, we verified that, in any given sector, the spectrum of the
Ising kink Hamiltonian, $H_L^{\rm k}(0)$, when restricted to the
interval $[0, 2J)$ consists of only isolated eigenvalues whose multiplicity
is at most two. In fact, for the sector $M = - 2Jx +m$ these
eigenvalues are determined by 
\begin{equation} \label{eq:evals}
E_{\pm}(m,n) \, = \, n^2 \, + \, (J \, \pm \, m) n
\end{equation}
for those values of $n \in \mathbb{N}$ with $E_{\pm}(m,n) < 2J$. It is
clear from (\ref{eq:evals}) that each of these eigenvalues have an isolation
distance $d_{\pm}(m,n)>0$ from the rest of the spectrum and that this
distance is independent of the length scale $L$. 

For our proof of the relative boundedness result in Theorem~\ref{thm:relbd}, we
expanded the Hamiltonian as
\begin{equation}
H_L^{\rm k}( \Delta^{-1}) \, = \, H_L^{\rm k}(0) \, + \, \Delta^{-1}
H_L^{(1)} \, + \, \left( 1 - \sqrt{1 - \Delta^{-2}} \right) H_L^{(2)}.
\end{equation}
Using the first resolvent formula, it is easy to see that
\begin{equation}
\left( H_L^{\rm k}( \Delta^{-1}) - \xi \right)^{-1} \, = \, R( \xi) \,
\left[ 1 \, + \, \left(  \Delta^{-1}
H_L^{(1)} \, + \, \left( 1 - \sqrt{1 - \Delta^{-2}} \right) H_L^{(2)} \right) \, R( \xi) \, \right]^{-1},
\end{equation}
where we have denoted the resolvent by $R(\xi) = \left( H_L^{\rm k}(0)
  - \xi \right)^{-1}$, and it is assumed that $\Delta^{-1}$ has been
chosen small enough so that
\begin{equation} \label{eq:norm<1}
\left\| \,  \left(  \Delta^{-1}
H_L^{(1)} \, + \, \left( 1 - \sqrt{1 - \Delta^{-2}} \right) H_L^{(2)}
\right) \, R( \xi) \, \right\| \, < \, 1.
\end{equation} 
It is clear from sections II.1.3-4 of \cite{TK1982} and chapter I of \cite{RS1972} that the spectral
projections corresponding to $H_L^{\rm k}( \Delta^{-1})$ can be
written as a power series in $\Delta^{-1}$, the coefficients of which
being integrals of the resolvent over a fixed contour
$\Gamma$. Proving an estimate of the form (\ref{eq:norm<1}) for
$\Delta$ large enough, uniformly with respect to $\xi \in \Gamma$, is 
sufficient to guarantee analyticity of the spectral projections. We
verify such a uniform estimate below. 

Let $E_{\pm}(m,n)$ be an eigenvalue of $H_L^{\rm k}(0)$
with isolation distance $d_{\pm}(m,n)$ as specified above. Denote by
$\Gamma$ the circle in the complex plane centered at $E_{\pm}(m,n)$
with radius $d_{\pm}(m,n)/2$. We claim that if 
\begin{equation}
\Delta \, > \, 18 J^{5/2},
\label{eq:twopiesquared}\end{equation}
then (\ref{eq:norm<1}) is satisfied uniformly for $\xi \in \Gamma$.

We proved in Theorem~\ref{thm:relbd} that for any vector $\psi$,
\begin{equation} \label{eq:relbdlin}
\left\| H_L^{(1)} \psi \right\| \, \leq \, \sqrt{J^2 \, + \, 2J^3} \,
\|H_L^{\rm k}(0) \psi \| \, + \, 2 J^2 \,  \sqrt{J^2 \, + \, 2J^3}  \,
\| \psi \|.
\end{equation}
Applying this bound to vectors $\psi$ of the form $\psi = R( \xi) \phi$
yields a norm estimate on $H_L^{(1)} R(\xi)$, i.e.,
\begin{equation}
\left\| H_L^{(1)} R( \xi) \phi \right\| \, \leq \, \left( \sqrt{J^2 \, + \, 2J^3} \,
\|H_L^{\rm k}(0) R( \xi) \| \, + \, 2 J^2 \,  \sqrt{J^2 \, + \, 2J^3}  \,
\| R( \xi) \| \right) \| \phi \|.
\end{equation}
Moreover, since
\begin{equation}
\|H_L^{\rm k}(0) R( \xi) \| \, \leq \, 1 \, + \, |\xi| \, \| R(\xi) \|,
\end{equation}
we have proved that
\begin{equation}
\left\|  \Delta^{-1} H_L^{(1)} R( \xi) \right\| \, \leq \, \Delta^{-1}
\sqrt{J^2 + 2 J^3} \left[ 1 \, + \, (| \xi| \, + \, 2 J^2 ) \, \| R(\xi)
  \| \, \right].
\end{equation}

Similar arguments, again using Theorem~\ref{thm:relbd}, imply that
\begin{equation} \label{eq:relbdquad}
\left\| \left( 1 - \sqrt{1- \Delta^{-2}} \right) H_L^{(2)} R( \xi)
\right\| \, \leq \, \left( 1 - \sqrt{1 - \Delta^{-2}} \right) \, 2J^2 \, \left\| R(\xi) \right\|.
\end{equation}

For $\xi \in \Gamma$, the circular contour described above, we have that
\begin{equation}
\| R(\xi) \|  \, = \,  \frac{1}{{\rm dist}(\xi,\sigma(H_L^{\rm k}(0)))} \, =
 \, \frac{2}{d_\pm(m,n)},
\end{equation}
and
\begin{equation}
| \xi | \, \leq \, E_{\pm}(m,n) \, + \, d_{\pm}(m,n)/2.
\end{equation}

We derive a bound of the form (\ref{eq:norm<1}), uniform for $\xi \in \Gamma$,
by ensuring $\Delta$ large enough so that
\begin{equation}\label{eq:ineqty}
C_1 \Delta^{-1} \, + \, C_2 (1 - \sqrt{1- \Delta^{-2}}) \, < \, 1, 
\end{equation}
where 
\begin{equation}
C_1 = \sqrt{J^2 + 2 J^3} \left[ 1 \, + \, \left(
  \frac{2E_{\pm}(m,n)}{d_{\pm}(m,n)} \, + \, 1 \, +
  \frac{4J^2}{d_{\pm}(m,n)} \right) \right] \quad \mbox{ and } 
\quad C_2 = \frac{4J^2}{d_{\pm}(m,n)}.
\end{equation}
Explicitly, one finds that the inequality (\ref{eq:ineqty}) is
satisfied for all 
\begin{equation} \label{eq:dbd}
\Delta > \frac{C_1^2+C_2^2}{ C_2\sqrt{C_1^2+2C_2-1}+ C_1 -C_1C_2}
\end{equation}
Equation (\ref{eq:twopiesquared}) is a simple sufficient condition
for $\Delta$ to satisfy this inequality. This is easy to verify if one
first replaces $1-\sqrt{1- \Delta^{-2}}$ by $\Delta^{-2}$ in (\ref{eq:ineqty}).

%%%%%%%%%%%%%%%%%%%%%%%%%%%%%%%%%%%%%%%%%%%%%%%%%%%%%%%%%%%%%%%%%%%
%%%%%%%%%%%%%%%%%%%%%%%%%%%%%%%%%%%%%%%%%%%%%%%%%%%%%%%%%%%%%%%%%%%
%%%%%%%%%%%%%%%%%%%%%%%%%%%%%%%%%%%%%%%%%%%%%%%%%%%%%%%%%%%%%%%%%%%

\Section{Acknowledgements}

This article is based on work supported in part by the U.S. National Science
Foundation under Grant \# DMS-0605342. J.M. wishes to thank NSF Vigre grant \#DMS-0135345. 
The research of S.S.\ was supported in part by a U.S.\ National Science Foundation
grant, \#DMS-0706927.


\begin{thebibliography}{10}

\bibitem{bolina2000}
O. Bolina, P. Contucci, and B. Nachtergaele,
{\it Path Integral Representation for Interface States of the 
Anisotropic Heisenberg Model}, 2000
Phys. Rev. Math. {\bf 12} 1325


\bibitem{PS1990}
V. Pasquier and H. Saleur, {\it Common structures between finite systems 
and conformal field theories through quantum groups}, 1990 {\it Nucl
Phys B} {\bf 330} 523

\bibitem{ASW1995}
F. C. Alcaraz, S. R. Salinas, and W. F. Wreszinski, {\it Anisotropic
ferromagnetic quantum domains}, 1995 {\it Phys Rev Lett} {\bf 75} 930

\bibitem{GW1995}
C.-T. Gottstein and R. F. Werner, {\it Ground states of the infinite
q-deformed {Heisenberg} ferromagnet}, arXiv:cond-mat/9501123

\bibitem{Mat1996}
T. Matsui, {\it On ground states of the one-dimensional ferromagnetic {$XXZ$}
  model}, 1996 {\it Lett Math Phys} {\bf 37} 397

\bibitem{KN1998}
T. Koma and B. Nachtergaele, {\it The complete set of ground states of 
the ferromagnetic XXZ chains}, 
1998, {\it Adv Theor Math Phys} {\bf 2} 533

\bibitem{DK2003} 
N. Datta, T. Kennedy, {\it Instability of interfaces in the antiferromagnetic
XXZ chain at zero temperature}, 2003, {\it Commun Math Phys} {\bf 236} 477

\bibitem{Mat2005}
T. Matsui, {\it On the absence on non-periodicd ground states for the
 antiferromagnetic XXZ model}, 2005, {\it Commun Math Phys} {\bf 253} 585

\bibitem{KNS2001}
T. Koma, B. Nachtergaele, S. Starr, {\it The spectral gap of the 
ferromagnetic Spin-J XXZ chain}, 2001 {\it Adv Theor Math Phys} {\bf 5} 1047

\bibitem{CM2003}
P. Caputo, F. Martinelli, {\it Relaxation time of anisotropic simple 
exclusion processes and quantum Heisenberg models}, 2003 
{\it Ann Appl Probab} {\bf 13} 691

\bibitem{Shor1997}
P. W. Shor, {\it Polynomial-time algorithms for prime factorization and
discrete logarithms on a quantum computer}, 1997 {\it SIAM J
Comput} {\bf 26} 1484

\bibitem{NC2000}
M. Nielsen and I. Chuang, 2000 {\it Quantum computation and Quantum
Information} (Cambridge University Press) 

\bibitem{MOL1999}
J. E. Mooij, T. P. Orlando, L. Levitov, L. Tian, Caspar H.van der Wal, 
Seth Lloyd, {\it Josephson Persistent-Current Qubit}, 1999 {\it Science} 
{\bf 285} 1036 

\bibitem{CVZ1998}
I. L. Chuang, L. M. K. Vandersypen, X. Zhou, D. W. Leung, S. Lloyd, 
{\it Experimental Realization of a Quantum
Algorithm}, 1998 {\it Nature} {\bf 393} 143

\bibitem{KBG2001}
N. Khaneja, R. Brockett, S. J. Glaser, {\it Time optimal control in 
spin systems}, 2001 {\it Phys Rev A} {\bf 63} 032308

\bibitem{MK2005}
H. Mabuchi and N. Khaneja, {\it Principles and applications of control 
in quantum systems}, 2005 {\it Int J Robust Nonlin} {\bf 15} 647

\bibitem{Gal2007}
A. Galiautdinov, {\it Steering with coupled qubits: A unitary evolution
approach}, 2007, arXiv:quant-ph/07051784

\bibitem{TK1982}
Tosio Kato, 1982 \emph{Perturbation theory of linear operators} 
(Springer-Verlag)

\bibitem{RS1972}
B.Simon and M. Reed, 1972 {\it Methods of Modern Mathematical Physics, 
Vol. IV: Analysis of Operators} (Academic press)

\end{thebibliography}
\end{document}